\newtheorem*{theorem*}{Theorem}
\newtheorem{theorem}{Theorem}[section]
\newtheorem{corollary}[theorem]{Corollary}
\newtheorem{lemma}[theorem]{Lemma}
\newtheorem{proposition}[theorem]{Proposition}
\newtheorem{definition}[theorem]{Definition}
\newtheorem{assumption}[theorem]{Assumption}
\numberwithin{equation}{section}
\theoremstyle{plain}
\newtheorem*{theoremA*}{Theorem A}
\newtheorem*{theoremB*}{Theorem B}
\newtheorem*{theoremC*}{Theorem C}
\newcommand{\beas}{\begin{eqnarray*}}
\newcommand{\eeas}{\end{eqnarray*}}
\newcommand{\bes} {\begin{equation*}}
\newcommand{\ees} {\end{equation*}}
\newcommand{\be} {\begin{equation}}
\newcommand{\ee} {\end{equation}}
\newcommand{\bea} {\begin{eqnarray}}
\newcommand{\eea} {\end{eqnarray}}
\newcommand{\R}{\mathbb R}
\newcommand{\C}{\mathbb C}
\newcommand{\Z}{\mathbb Z}
\newcommand{\B}{\mathbb B}
\newcommand{\Sp}{\mathbb S}
\newcommand{\A}{\mathcal A}
\newcommand{\tr}{\operatorname{tr}} 
\numberwithin{equation}{section}
\title[]{on the Lieb--Wehrl Entropy conjecture for $SU(N,1)$}
\author[]{Mandeep Singh}	
\address{Mandeep Singh,
\endgraf Department of Mathematics,
IIT Guwahati,
\endgraf Guwahati - 781039, Assam, India.}
\email{mandeepgodara27@gmail.com, mandeep.singh@iitg.ac.in}
\date{}
\subjclass[2020]{Primary 39B62, 22E45, Secondary 30H20, 81R30}
\keywords{Coherent states,  Lieb-Wehrl entropy conjecture, Faber-Krahn inequality}
\thanks{This work was supported by an Institute Fellowship of IIT Guwahati.}
\begin{document}

\begin{abstract}

We investigate the sharp functional inequalities for the coherent state transforms of $SU(N,1)$. These inequalities are rooted in Wehrl's definition of semiclassical entropy and his conjecture about its minimum value. Lieb resolved this conjecture in 1978, posing a similar question for Bloch coherent states of $SU(2)$. The $SU(2)$ conjecture was settled by Lieb and Solovej in 2014, and the conjecture was extended for a wide class of Lie groups. The generalized Lieb conjecture has been resolved for several Lie groups, including $SU(N),\, N\geq2$, $SU(1,1)$, and its $AX+B$ subgroup. Under the Li–Su assumption on isoperimetric regions in the complex hyperbolic ball, our sharp functional inequalities for the coherent state transforms extend this resolution to $SU(N,1),N\geq2$. Additionally, we explore the Faber-Krahn inequality, which applies to the short-time Fourier transform with a Gaussian window. This inequality was previously proven by Nicola and Tilli and later extended by Ramos and Tilli to the wavelet transform. In this paper, we further extend this result within the framework of the Bergman space $\A_{\alpha}$.

\end{abstract}
\maketitle

\maketitle
\section{Introduction}\label{sec:introduction}
We begin with sharp functional inequalities for coherent state transforms and their connection with Lieb -Wehrl entropy conjecture. To establish the connection, we consider the normalized Gaussian functions $\psi_z\in L^2(\R)$  parametrized by $z=(p,q)\in\R^2$ given by
\bes
\psi_z(x)= (\pi \hbar)^{-\frac{1}{4}}e^{\frac{-(x-q)^2}{2\hbar}+\frac{ipx}{\hbar}}, \text{ for all } x\in\R,
\ees
where $\hbar >0$ is a fixed constant. These vectors are (Schr\"{o}dinger, Klauder, Glauber) coherent states first considered by Schr\"{o}dinger and these are related to irreducible unitary representations of the Heisenberg-Weyl group. The analogues of these vectors for other Lie groups have been studied in the literature (see e.g. \cite{per86,per72}). For a nonnegative operator $\rho$ on $L^2(\R)$ with $\tr \rho=1$, the transform
\bes
h_\rho(z)= \langle \psi_z,\rho \psi_z\rangle,
\ees
is known as the coherent state transforms, the Husimi function, or the covariant symbol. Corresponding to $\rho$, Wehrl in \cite{weh79} defined the classical entropy by
\bes
-\int_{\R^2} h_\rho(z) \log h_\rho(z) dz,
\ees
where he proved that it is positive, and conjectured that its minimum value occurs when $\rho$ is a projection operator onto any coherent state, that is, when $\rho =|\psi_{z_0}\rangle\langle\psi_{z_0}|$ for some $z_0\in\R^2$. Lieb proved the conjecture shortly thereafter in \cite{lie78}, where he proved more generally that for $s\geq1$, the quantity
\bes
\int_{\R^2}( h_\rho(z))^sdz
\ees
is maximized when $\rho =|\psi_{z_0}\rangle\langle\psi_{z_0}|$ for some $z_0\in\R^2$. Since the above quantity is independent of $\rho$ for $s=1$, Wehrl's entropy conjecture follows by differentiating it at $s=1$. Later, Carlen gave an alternate proof of Lieb's result based on logarithmic Sobolev inequality in \cite{car91}, and he also characterized the cases of equality. Another proof by Luo is also given in \cite{luo00}. Lieb and Solovej generalized above result in \cite{lie14} and proved that for any convex function $\Phi:[0,1]\to\R$ with $\Phi(0)=0$, the integral
\bes
\int_{\R^2}\Phi( h_\rho(z))dz
\ees
is maximized when $\rho =|\psi_{z_0}\rangle\langle\psi_{z_0}|$ for some $z_0\in\R^2$. They called it the generalized Wehrl conjecture.

In \cite{lie78}, Lieb observed that the analogue of Wehrl's entropy conjecture should hold if we replace Scr\"{o}dinger coherent states by Bloch coherent states, which are related to the irreducible representations of $SU(2)$. This conjecture was resolved in \cite{lie14}, although some special cases were already known in \cite{bod04,sch99}. Lieb and Solovej used limting argument in \cite{lie14} and noted that this approach does not work for $SU(N)$, $N>2$. Later, in \cite{lie16}, they proved the conjecture for all symmetric representations of $SU(N)$. Furthermore, they conjectured that the analogue of Wehrl's conjecture should hold, at least for a wide class of Lie groups. 

Recently, in \cite{lie21}, Lieb and Solovej considered the group $SU(1,1)$ and its subgroup $AX+B$ where they formulated the Wehrl-type entropy conjecture into a problem about containment of Bergman spaces on $\mathbb D$. Kulikov proved this containment of Bergman spaces on $\mathbb D$ in \cite{kul22}, hence confirming Wehrl's conjecture for $SU(1,1)$. He also characterized the cases of equality when $\Phi$ is strictly convex. Some particular cases for $SU(1,1)$ were already considered in \cite{ban09,bay19}. Inspired by Kulikov's technique, Frank applied a uniform approach to prove Wehrl's conjecture for the Heisenberg-Weyl group, $SU(2)$, and $SU(1,1)$, and characterized the cases of equality when $\Phi$ is not affine linear (see \cite{fra23}).

We shall utilize the 
containment of Bergman spaces on $\B_N$ established by Li and Su in \cite{xia24}. 
Their argument relies on the following assumption.

\begin{assumption}\label{assump}
Isoperimetric regions in the complex hyperbolic ball 
$\B_N$ are geodesic balls.
\end{assumption}

Since this geometric assumption is not yet known in full generality, our results should be understood as conditional on it. If this assumption is eventually established, our results become unconditional. One of the main results of this paper is the following theorem, which generalizes \cite[Theorem 6]{fra23} to higher dimensions.

\hypertarget{th:A}{\begin{theoremA*}[Under Assumption \ref{assump}]
   Let $k \in \Z_+$ and consider the irreducible discrete series representation of $SU(N,1)$ on the Hilbert space $\A_{(N+1)k}$ corresponding to $k$. Let $\Phi:[0,1] \rightarrow \R$ be convex. Then,
\bes
\begin{split}
    \sup \left\{ \int_{\B_N}\Phi(|\langle \phi_z,\phi \rangle|^2)dm(z) : \phi \in \A_{(N+1)k}, \ \|\phi\|_{\A_{(N+1)k}}=1\right\}\\ 
    = \frac{N}{(N+1)k}\int_0^1\Phi(s)s^{-\frac{N}{(N+1)k}-1}\left(1-s^{1/(N+1)k}\right)^{N-1}ds,
\end{split}
\ees
and the supremum is attained for $\phi = e^{i\theta}\phi_{z_0}$ for some $z_0 \in \B_N$, $\theta\in\R$. If $\Phi$ is strictly convex and if the supremum is finite, then it is attained only for such $\phi$.
\end{theoremA*}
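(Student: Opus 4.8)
The proof proceeds in three stages: reduce the operator-valued statement to a statement about pure states (rank-one projections), identify the resulting quantity as an integral of the modulus of a Bergman-space coherent state transform, and then invoke the Li–Su containment theorem for Bergman spaces on $\B_N$ to get the sharp bound, together with its equality analysis.

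\medskip

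\textbf{Step 1: Reduction to pure states and the layer-cake formula.}
First I would recall that for convex $\Phi$ with $\Phi(0)=0$, the functional $\rho \mapsto \int_{\B_N}\Phi(h_\rho(z))\,dm(z)$ evaluated on density operators $\rho$ is convex in $\rho$, so it is maximized at an extreme point of the set of density operators, i.e. at a rank-one projection $\rho = |\phi\rangle\langle\phi|$ with $\|\phi\|=1$; this is the same reduction used by Lieb--Solovej and Frank. (If $\Phi(0)\neq 0$ one first subtracts the affine part, which only changes the functional by a constant since $\int_{\B_N} h_\rho\,dm$ is independent of $\rho$ by the resolution of identity for the coherent states $\phi_z$.) Hence it suffices to compute, for $u(z) := |\langle \phi_z,\phi\rangle|^2$ with $\|\phi\|_{\A_{(N+1)k}}=1$, the supremum of $\int_{\B_N}\Phi(u(z))\,dm(z)$. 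Writing $\Phi$ via its second derivative (a nonnegative measure) as a superposition of the functions $s\mapsto (s-t)_+$, by the layer-cake / Choquet representation this supremum is controlled once we know the optimal \emph{distribution function} of $u$, i.e. the optimal value of $m(\{z : u(z) > t\})$ for every $t\in(0,1)$.

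\medskip

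\textbf{Step 2: The Bergman space containment (Li--Su).}
The function $u(z) = |\langle \phi_z,\phi\rangle|^2$ is, up to the standard normalization of the Bergman kernel on $\B_N$, the square modulus of a holomorphic function on $\B_N$ weighted so that $\int_{\B_N} u^p$-type integrals correspond to $\A_{(N+1)k}$-norms after a change of variables; more precisely the coherent state transform of $\phi\in\A_{(N+1)k}$ is (a unimodular factor times) the reproducing-kernel evaluation, and the invariant measure $dm$ is the one for which the representation is unitary. The key input — Li and Su's containment of Bergman spaces on $\B_N$ under Assumption~\ref{assump} — states precisely that among all such $\phi$ of unit norm, the \emph{decreasing rearrangement} of $u$ with respect to $dm$ is dominated pointwise by that of $u_0 := |\langle\phi_z,\phi_{z_0}\rangle|^2$, the transform of a coherent state (equivalently: the superlevel sets of $u$ have at most the measure of the corresponding superlevel sets of $u_0$, which are geodesic balls). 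Granting this, for \emph{every} convex $\Phi$ (first for $\Phi(s)=(s-t)_+$, then by the superposition in Step 1, with a truncation/monotone-convergence argument to handle the possibly infinite supremum and non-finite $\Phi$) one obtains
\bes
\int_{\B_N}\Phi(u(z))\,dm(z) \le \int_{\B_N}\Phi(u_0(z))\,dm(z).
\ees

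\medskip

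\textbf{Step 3: Evaluating the extremal integral and equality.}
It remains to compute $\int_{\B_N}\Phi(u_0(z))\,dm(z)$ for $\phi=\phi_{z_0}$. By $SU(N,1)$-invariance we may take $z_0=0$, so $u_0(z) = |\langle \phi_z,\phi_0\rangle|^2 = (1-|z|^2)^{(N+1)k}$ in the ball model (this is the standard coherent-state overlap for the discrete series of weight $(N+1)k$), and $dm$ is the invariant measure. A direct change of variables $s = (1-|z|^2)^{(N+1)k}$ in the radial integral, using that the invariant volume element in dimension $N$ contributes the factor producing $s^{-N/((N+1)k)-1}(1-s^{1/((N+1)k)})^{N-1}$ with the stated constant $\frac{N}{(N+1)k}$, yields the claimed right-hand side. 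Finally, for the equality case: when $\Phi$ is strictly convex and the supremum is finite, Li--Su's equality characterization (again under Assumption~\ref{assump}) forces $u$ to have the same distribution function as $u_0$ and then, by the rigidity in the Faber--Krahn/isoperimetric step, $u$ itself must be a geodesic-ball profile, i.e. $|\langle\phi_z,\phi\rangle|^2 = |\langle \phi_z,\phi_{z_0}\rangle|^2$ for some $z_0$; since both sides are (moduli of) holomorphic coherent-state transforms of unit vectors, this forces $\phi = e^{i\theta}\phi_{z_0}$.

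\medskip

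\textbf{Main obstacle.}
The routine parts are Steps 1 and 3 (convexity reduction, layer-cake, and an elementary radial integral). The substantive work — and the reason for Assumption~\ref{assump} — is Step 2: translating "coherent state transform of $\A_{(N+1)k}$" precisely into the Bergman-space framework of Li--Su (getting the weight, the measure, and the parameter $\alpha = (N+1)k$ to match exactly), and then importing their rearrangement inequality and its equality case in a form strong enough to run the superposition over all convex $\Phi$, including the handling of non-integrable $\Phi$ and the strict-convexity rigidity. I expect the normalization bookkeeping (reconciling the physicists' coherent-state overlap with the analysts' Bergman kernel powers) to be where the care is needed.
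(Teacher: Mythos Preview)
Your overall architecture matches the paper's: identify $|\langle\phi_z,\phi\rangle|^2$ with a Bergman-space quantity, invoke Li--Su, and compute the radial integral. Two points deserve correction.

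First, the reduction in Step~1 from density operators to pure states is unnecessary here: Theorem~A is \emph{already} stated for unit vectors $\phi$, not for density operators $\rho$ (that extension is Theorem~B). More importantly, the ``normalization bookkeeping'' you flag as the main obstacle is in fact a one-line computation: since $(1-\langle z,w\rangle)^{-(N+1)k}$ is the reproducing kernel of $\A_{(N+1)k}$, one has directly $\langle\phi_z,\phi\rangle = (1-|z|^2)^{(N+1)k/2}\phi(z)$, so $|\langle\phi_z,\phi\rangle|^2 = u_\phi(z)$ in the paper's notation. After this, Li--Su's Theorem~1.6 applies verbatim to give $\int_{\B_N}\Phi(u_\phi)\,dm \le \int_{\B_N}\Phi(u_1)\,dm$ for convex $\Phi$; no layer-cake decomposition is needed, since their result is already stated for general convex $\Phi$.

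Second, and more substantively, your description of the Li--Su input as ``the decreasing rearrangement of $u$ is dominated pointwise by that of $u_0$'' (equivalently $\mu_\phi(t)\le\mu_1(t)$ for all $t$) is \emph{false}, and if it were true it would be vacuous: since $\int_0^1\mu_\phi(t)\,dt = \int_0^1\mu_1(t)\,dt$ by the norm condition $\|\phi\|=1$, pointwise dominance would force $\mu_\phi\equiv\mu_1$ for every $\phi$. What Li--Su actually prove (their Theorem~3.1) is that $t\mapsto t^{1/((N+1)k)}(\mu_\phi^{1/N}(t)+1)$ is decreasing, which yields only a \emph{single-crossing} property: there exists $a\in[0,1]$ with $\mu_\phi\ge\mu_1$ on $[0,a]$ and $\mu_\phi\le\mu_1$ on $[a,1]$. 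This, together with the equal integrals, is what drives both the convex inequality and the equality analysis. The paper's equality argument does not cite a black-box equality case from Li--Su; instead it observes that if equality holds and $\Phi'$ is strictly increasing, then $h(t):=(\Phi'(t)-\Phi'(a))(\mu_\phi(t)-\mu_1(t))\le 0$ everywhere while $\int_0^1 h(t)\,dt=0$, forcing $\mu_\phi\equiv\mu_1$, hence $\|u_\phi\|_\infty=1$, and then the sharp pointwise bound (Proposition~\ref{prop:pointevaluation}) gives $\phi=e^{i\theta}\phi_{z_0}$. Your proposed route through ``rigidity in the Faber--Krahn/isoperimetric step'' would require knowing that the superlevel sets of $u_\phi$ are geodesic balls, which is not what one gets from equality of distribution functions alone; the paper's argument sidesteps this by going directly to $\|u_\phi\|_\infty=1$.
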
}

Note that if $\Phi(t)=t^s$ for $s\geq 1$, then the analogue of Wehrl's original entropy conjecture follows from Theorem \hyperlink{th:A}{A} by taking the negative derivative at $s=1$. One consequence of the above theorem is the following result, which generalizes \cite[Corollary 7]{fra23}.

\hypertarget{th:B}{\begin{theoremB*}[Under Assumption \ref{assump}] \label{co:generalizedwehrlconjecture}
   Let $k \in \Z_+$ and consider the irreducible discrete series representation of $SU(N,1)$ on $\A_{(N+1)k}$ corresponding to $k$. Let $\Phi:[0,1] \rightarrow \R$ be convex. Then,
\bes
\begin{split}
    \sup \left\{ \int_{\B_N}\Phi(\langle \phi_z,\rho\phi_z \rangle)dm(z) : \rho \geq 0 \text{ on }\A_{(N+1)k}, \ \tr \rho=1\right\}\\ 
    = \frac{N}{(N+1)k}\int_0^1\Phi(s)s^{-\frac{N}{(N+1)k}-1}\left(1-s^{1/(N+1)k}\right)^{N-1}ds,
\end{split}
\ees
and the supremum is attained for $\rho = |e^{i\theta}\phi_{z_0}\rangle\langle e^{i\theta}\phi_{z_0}|$ for some $z_0 \in \B_N$, $\theta\in\R$. If $\Phi$ is strictly convex and if the supremum is finite, then it is attained only for such $\rho$.
\end{theoremB*}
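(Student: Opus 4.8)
The plan is to derive Theorem B from Theorem \hyperlink{th:A}{A} by reducing the optimization over mixed states to the one over pure states, via the spectral theorem and Jensen's inequality (the same scheme Frank used for $SU(1,1)$ in \cite{fra23}). Since $\rho\ge 0$ with $\tr\rho=1$ is trace class, write its spectral decomposition $\rho=\sum_j\lambda_j\,|\phi_j\rangle\langle\phi_j|$ with $\lambda_j\ge 0$, $\sum_j\lambda_j=1$, and $\{\phi_j\}$ an orthonormal system in $\A_{(N+1)k}$. Then for every $z\in\B_N$,
\[
\langle\phi_z,\rho\phi_z\rangle=\sum_j\lambda_j\,|\langle\phi_z,\phi_j\rangle|^2
\]
is a countable convex combination of numbers in $[0,1]$, so convexity of $\Phi$ gives the pointwise bound $\Phi(\langle\phi_z,\rho\phi_z\rangle)\le\sum_j\lambda_j\,\Phi(|\langle\phi_z,\phi_j\rangle|^2)$.

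First I would integrate over $\B_N$ and interchange the sum and the integral, obtaining
\[
\int_{\B_N}\Phi(\langle\phi_z,\rho\phi_z\rangle)\,dm(z)\ \le\ \sum_j\lambda_j\int_{\B_N}\Phi\big(|\langle\phi_z,\phi_j\rangle|^2\big)\,dm(z)\ \le\ \sum_j\lambda_j\,R\ =\ R,
\]
where $R$ denotes the right-hand side of Theorem B and the middle inequality is Theorem \hyperlink{th:A}{A} applied to each unit vector $\phi_j$. The interchange is legitimate after reducing to a nonnegative integrand: subtract from $\Phi$ a supporting affine function (through the origin when $\Phi'(0^+)>-\infty$; when $\Phi'(0^+)=-\infty$ one instead uses that $\Phi\le 0$ near $0$), note that the affine part integrates against $dm$ to a constant multiple of $\tr\rho$ by the resolution of the identity $\int_{\B_N}|\phi_z\rangle\langle\phi_z|\,dm(z)=cI$, and observe via Chebyshev's inequality $m\big(\{\langle\phi_z,\rho\phi_z\rangle>t\}\big)\le c/t$ that every integral above is well defined in $[-\infty,\infty)$; in the degenerate case where they equal $-\infty$ there is nothing to prove. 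Conversely, taking $\rho=|e^{i\theta}\phi_{z_0}\rangle\langle e^{i\theta}\phi_{z_0}|$ reduces the functional to $\int_{\B_N}\Phi(|\langle\phi_z,\phi_{z_0}\rangle|^2)\,dm(z)$, which equals $R$ by the attainment clause of Theorem \hyperlink{th:A}{A}. Hence the supremum in Theorem B equals $R$ and is attained at these density matrices.

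For the equality statement, assume $\Phi$ is strictly convex, $R<\infty$, and that $\rho$ attains the supremum; then both displayed inequalities are equalities. Equality in the application of Theorem \hyperlink{th:A}{A} forces $\int_{\B_N}\Phi(|\langle\phi_z,\phi_j\rangle|^2)\,dm(z)=R$ for every $j$ with $\lambda_j>0$, so by the equality clause of Theorem \hyperlink{th:A}{A} each such $\phi_j$ equals $e^{i\theta_j}\phi_{z_j}$ for some $z_j\in\B_N$, $\theta_j\in\R$. Equality in the pointwise Jensen inequality, together with strict convexity, forces that for almost every $z$ the numbers $|\langle\phi_z,\phi_j\rangle|^2$ with $\lambda_j>0$ all coincide. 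If two indices $j\ne j'$ had $\lambda_j,\lambda_{j'}>0$, this would give $|\langle\phi_z,\phi_{z_j}\rangle|^2=|\langle\phi_z,\phi_{z_{j'}}\rangle|^2$ for almost every $z$; both sides being real-analytic on the connected domain $\B_N$, the identity would hold everywhere, and evaluating at $z=z_j$ yields $1=|\langle\phi_{z_j},\phi_{z_j}\rangle|^2$ on the left but $|\langle\phi_{z_j},\phi_{z_{j'}}\rangle|^2=|\langle\phi_j,\phi_{j'}\rangle|^2=0$ on the right (the two vectors differing only by unimodular factors), a contradiction. Thus $\rho$ has rank one, $\lambda_j=1$ for the surviving index, and $\rho=|e^{i\theta}\phi_{z_0}\rangle\langle e^{i\theta}\phi_{z_0}|$, as claimed.

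Essentially all the substance is already contained in Theorem \hyperlink{th:A}{A}, so the remaining steps are soft. I expect the only mild obstacle to be the measure-theoretic bookkeeping in the sum–integral interchange, since $m(\B_N)=\infty$ and $\Phi$ need not be bounded below near $0$; this is exactly what the supporting-line reduction and the Chebyshev estimate are designed to handle. The other point needing a (short) argument is the uniqueness statement for Husimi functions used above — that $z\mapsto|\langle\phi_z,\phi_{z_0}\rangle|^2$ attains its maximal value $1$ only at $z=z_0$ — which is what makes two distinct coherent-state components incompatible with equality in Jensen's inequality.
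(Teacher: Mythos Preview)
Your proposal is correct and follows essentially the same route as the paper: spectral decomposition of $\rho$, pointwise Jensen, then Theorem~\hyperlink{th:A}{A} term by term, with the equality case handled by combining the equality clauses of both inequalities together with orthonormality. Your treatment is in fact more careful than the paper's---you address the sum--integral interchange (which the paper omits entirely) and give a cleaner rank-one argument via evaluation at $z=z_j$, whereas the paper passes somewhat loosely from $u_{\psi_i}\equiv u_{\psi_1}$ to ``$\psi_i=\psi_1$'' before invoking orthogonality.
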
}
In this paper, we also discuss another important topic, the Faber–Krahn inequality.  Given $f\in L^2(\R)$, the Short-time Fourier transform (STFT) is defined as 
$$\nu f(x,\omega)=\int_{\R}e^{-2\pi iy\omega}f(y)\phi(x-y)dy, \;\;\;\; x,\omega \in \R$$
where $\phi(x)=2^{1/4}e^{-\pi x^2}$ is the Gaussian window. The STFT is also called \textit{Bargmann transform} in Analysis and the \textit{Coherent state transform} in quantum mechanics. We refer to \cite{nic22} and the references provided therein for more details on this topic. The Faber–Krahn inequality for STFT, as given in \cite[Theorem 1.1]{nic22}, can be conveniently restated in terms of functions in the Fock space on $\C$. In \cite{nic22}, Nicola and Tilli proved the following result for the Fock space.
\begin{theorem} \cite[Theorem 3.1]{nic22}
    For every $F\in\mathcal{F}^2(\C)\setminus\{0\}$ and every measurable set $E\subset\R^2$ of finite measure, we have
    \bes
    \frac{\int_E|F(z)|^2e^{-\pi |z|^2 }dz}{\|F\|_{\mathcal{F}^2}^2}\leq 1-e^{-|E|}.
    \ees
    Moreover, equality occurs (for some $F$ and for some $E$ such that $0<|E|<\infty$) if and only if $F=cF_{z_0}$ (for some $z_0\in\C$ and some nonzero $c\in\C$) and $E$ is equivalent, up to a set of measure zero, to a ball centered at $z_0$.
\end{theorem}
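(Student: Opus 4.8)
The plan is to follow the Nicola--Tilli strategy and reduce the statement to a one-dimensional differential inequality for the distribution function of the concentration density. First I would normalize $\|F\|_{\mathcal F^2}=1$ and set $u(z)=|F(z)|^2e^{-\pi|z|^2}$, so that $\int_{\R^2}u\,dz=1$; the reproducing kernel estimate $|F(z)|=|\langle F,K_z\rangle|\le\|F\|_{\mathcal F^2}K(z,z)^{1/2}=e^{\pi|z|^2/2}$ gives the pointwise bound $0\le u\le 1$, and $u$ is real-analytic, tends to $0$ at infinity, and vanishes only on the discrete zero set of $F$. By the bathtub principle, for every measurable $E$ with $|E|=s$ one has $\int_E u\,dz\le\int_0^s u^*(\sigma)\,d\sigma$, with equality exactly when $E$ coincides up to a null set with a superlevel set $\{u>t\}$; so it suffices to prove $\int_0^s u^*(\sigma)\,d\sigma\le 1-e^{-s}$ for all $s>0$.

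The heart of the argument is a bound on $\mu(r):=|\{u>e^{-r}\}|=|\{v<r\}|$, where $v:=-\log u\ge 0$. On the open set $\{v<r\}$, which avoids the zeros of $F$, one has $\Delta v=4\pi$ (since $v=\pi|z|^2-\log|F|^2$ and $\log|F|^2$ is harmonic off the zeros). For a.e.\ $r>r_0:=-\log\|u\|_\infty\ge 0$ the value $r$ is regular, $\{v=r\}$ is a finite union of smooth closed curves, and I would combine three facts: the divergence theorem $\int_{\{v=r\}}|\nabla v|\,d\mathcal H^1=\int_{\{v<r\}}\Delta v=4\pi\mu(r)$; the Cauchy--Schwarz inequality together with the coarea identity $\mu'(r)=\int_{\{v=r\}}|\nabla v|^{-1}\,d\mathcal H^1$, which gives $\mathcal H^1(\{v=r\})^2\le 4\pi\mu(r)\mu'(r)$; and the planar isoperimetric inequality $\mathcal H^1(\{v=r\})^2\ge 4\pi\mu(r)$. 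Together these yield $\mu'(r)\ge 1$ wherever $\mu(r)>0$. Since $\mu$ is a continuous increasing bijection from $(r_0,\infty)$ onto $(0,\infty)$ with $\mu(r_0^+)=0$, its inverse $r(\sigma):=\mu^{-1}(\sigma)$ satisfies $r(0^+)=r_0$ and $r'(\sigma)\le 1$, so $r(\sigma)-\sigma$ is non-increasing; moreover $u^*(\sigma)=e^{-r(\sigma)}$.

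It then remains a calculus exercise: with $\Phi(s):=(1-e^{-s})-\int_0^s e^{-r(\sigma)}\,d\sigma$ one has $\Phi(0)=0$, $\Phi(\infty)=\int_0^\infty(e^{-\sigma}-u^*(\sigma))\,d\sigma=0$, and $\Phi'(s)=e^{-s}-e^{-r(s)}$, which is nonnegative while $r(s)\ge s$ and nonpositive while $r(s)\le s$; because $r(s)-s$ is non-increasing, $\Phi$ rises and then falls, and with equal endpoint values $\Phi\ge 0$ throughout. This gives $\int_0^s u^*(\sigma)\,d\sigma=\int_0^s e^{-r(\sigma)}\,d\sigma\le 1-e^{-s}$, the desired inequality. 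For the equality case I would argue that equality forces both $E=\{u>t\}$ up to a null set and $\Phi(|E|)=0$; the shape of $\Phi$ shows that if $r_0>0$ then $\Phi>0$ on $(0,\infty)$, so necessarily $r_0=0$, whence $\Phi\equiv 0$, i.e.\ $\mu(r)=r$ and $r(\sigma)=\sigma$ for all $\sigma$. The identity $\mu(r)=r$ forces equality for a.e.\ (hence, by real-analyticity, all) $r$ in both the isoperimetric and the Cauchy--Schwarz steps, so every $\{v<r\}$ is a disk and $|\nabla v|$ is constant on $\{v=r\}$; combined with $\Delta v=4\pi$ and $\min v=0$ this pins down $v(z)=\pi|z-z_0|^2$, i.e.\ $u(z)=e^{-\pi|z-z_0|^2}$ and $F=cF_{z_0}$, and then $\{u>t\}$ is exactly a ball centered at $z_0$. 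The converse is a direct computation.

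The main obstacle is making the differential inequality for $\mu$ fully rigorous: $v$ is smooth only away from the zeros of $F$ and blows up there, so one must justify Sard's theorem, the coarea formula, the divergence theorem and the isoperimetric inequality for the a priori merely measurable superlevel sets, and show that $\mu$ is absolutely continuous so that $\mu'\ge 1$ may be integrated. The second delicate point is the rigidity in the equality case --- upgrading the almost-everywhere equalities to all levels and deducing from ``every level set of $v$ is a circle with $|\nabla v|$ constant on it'' that $v$ is exactly the radial paraboloid $\pi|z-z_0|^2$.
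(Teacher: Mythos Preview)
The paper does not prove this statement; it is quoted from Nicola--Tilli as background for Theorem~C, so there is no in-paper proof to compare against. Your proposal is correct and is essentially the original Nicola--Tilli argument.

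It is still instructive to set your packaging beside the paper's proof of the analogous Theorem~C on $\B_N$. The core is identical: bathtub reduction to superlevel sets, the coarea formula for the derivative of the distribution function, Cauchy--Schwarz on $\partial\{u>t\}$, the divergence theorem converting the boundary integral of $|\widetilde\nabla u|$ into a volume integral of $\widetilde\Delta\log u$, and the isoperimetric inequality. You encode all this as the scalar inequality $\mu'(r)\ge 1$ for $\mu(r)=|\{v<r\}|$ and then analyze $\Phi(s)=(1-e^{-s})-\int_0^s u^*$; the paper works instead with $G(s)=I_\phi(s)-J(s)$, derives $(hG')'\ge 0$ for a weight $h$, and runs a Rolle-type contradiction on the zeros of $G$ and $G'$. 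These are equivalent reparametrizations of the same mechanism; your $\Phi$ is essentially $-G$, and in the Fock case the weight $h$ is just $e^s$, so $(e^sG')'\ge 0$ is exactly $(u^*)'\ge -u^*$, i.e.\ your $\mu'\ge 1$.

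One place where the paper's treatment of equality (for Theorem~C, via Lemma~\ref{lem:propertyofg}) is cleaner than your sketch: once your argument yields $r_0=0$, i.e.\ $\|u\|_\infty=1$, you can stop and invoke the equality case of the reproducing-kernel bound $|F(z)|\le e^{\pi|z|^2/2}$ (the Fock analogue of Proposition~\ref{prop:pointevaluation}), which directly forces $F=cF_{z_0}$. There is no need to trace equality back through the isoperimetric and Cauchy--Schwarz steps and then argue that every sublevel set of $v$ is a disk with $|\nabla v|$ constant on its boundary; this sidesteps what you rightly flag as the second delicate point.
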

Inspired by the proof of the Faber-Krahn inequality for STFT in the work mentioned above,  Ramos and Tilli extended the Faber-Krahn inequality for Wavelet transform in \cite{ram23}. They further reduced this problem to an optimization problem on Bergman spaces. 
Now, we state the second main result of this paper. The following can be considered as an extension of the above result and a version of Faber-Krahn inequality for Bergman spaces $\mathcal A_\alpha$ on $\B_N$. 

 \hypertarget{th:C}{\begin{theoremC*}[Under Assumption \ref{assump}] \label{th:faberkrahn}
    Let $\alpha >N$ and $s>0$ be fixed. Then, for every $\phi \in \A_{\alpha}$ with $\|\phi\|_{\A_{\alpha}}=1$ and every measurable set $E \subset \B_N$ such that $m(E)=s$, we have
    \be \label{eq:faberinequality}
    \int_{E}|\phi(z)|^2\left(1-|z|^2\right)^{\alpha}dm(z) \leq \int_{B_s}\left(1-|z|^2\right)^{\alpha}dm(z),
    \ee
    where $B_s$ is the ball centered at the origin such that $m(B_s)=s$. Moreover, there is equality in (\ref{eq:faberinequality}) if and only if $\phi \equiv e^{i\theta}\phi_{z_0}$ for some $z_0 \in \B_N$, $\theta\in\R$, and $E$ is equivalent (up to measure 0) to a ball centered at $z_0$, such that $m(E)=s$.
\end{theoremC*}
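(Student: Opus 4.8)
The plan is to reduce the Faber–Krahn inequality on $\B_N$ to the containment of Bergman spaces established by Li and Su, exactly as Ramos–Tilli reduced the wavelet Faber–Krahn inequality to a Bergman-space optimization, and as Nicola–Tilli handled the Fock-space case. The first step is to set up a rearrangement framework: given $\phi\in\A_\alpha$ with $\|\phi\|_{\A_\alpha}=1$, define the function $u(z)=|\phi(z)|^2(1-|z|^2)^\alpha$ on $\B_N$ and its distribution function $\mu_u(t)=m(\{z:u(z)>t\})$, together with the decreasing rearrangement $u^*$. The quantity $\int_E u\,dm$ over sets $E$ of fixed measure $s$ is maximized by taking $E$ to be the superlevel set $\{u>u^*(s)\}$ (up to a measure-zero correction on the level set itself), so it suffices to bound $\int_0^s u^*(r)\,dr$. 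This is where the Li–Su containment enters: it should say precisely that the decreasing rearrangement of $|\phi(z)|^2(1-|z|^2)^\alpha$ is dominated in the Hardy–Littlewood sense by that of the extremal $|\phi_{z_0}(z)|^2(1-|z|^2)^\alpha$, i.e. $\int_0^s u^*\le\int_0^s v^*$ where $v$ corresponds to a coherent state. Equivalently, one phrases it as: for every convex $\Phi$ with $\Phi(0)=0$, $\int_{\B_N}\Phi(u)\,dm\le\int_{\B_N}\Phi(v)\,dm$, which is the content available from Theorem A (applied with $\alpha=(N+1)k$; for general real $\alpha>N$ one uses the analytic continuation / general-weight version of the Li–Su result).

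The second step is to identify the extremal profile explicitly. For $\phi=\phi_{z_0}$ a normalized reproducing kernel (coherent state), $|\phi_{z_0}(z)|^2(1-|z|^2)^\alpha$ is, after the Möbius automorphism of $\B_N$ sending $z_0$ to $0$, a radial decreasing function of $|z|$; its superlevel sets are exactly the Möbius images of balls centered at the origin, and when $z_0=0$ they are literally the Euclidean balls $B_r$. Hence $\int_0^s v^*(r)\,dr=\int_{B_s}(1-|z|^2)^\alpha\,dm(z)$, which is the right-hand side of \eqref{eq:faberinequality}. Combining with Step 1,
\[
\int_E u\,dm\le\int_0^s u^*\le\int_0^s v^*=\int_{B_s}(1-|z|^2)^\alpha\,dm(z),
\]
which is the desired inequality. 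The invariance of the whole setup under the unitary action of $SU(N,1)$ (which acts on $\A_\alpha$ by weighted composition with automorphisms of $\B_N$ and preserves the measure $m$) is what lets us reduce to $z_0=0$ without loss of generality.

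The third step is the equality analysis. Equality in $\int_E u\le\int_0^s u^*$ forces $E$ to coincide, up to measure zero, with the superlevel set $\{u>u^*(s)\}$ (and $u$ must have no flat piece at height $u^*(s)$ of positive measure, or that flat piece must be allocated entirely to $E$). Equality in $\int_0^s u^*\le\int_0^s v^*$ for a single value $s>0$, together with the strict-convexity part of Theorem A (choosing a strictly convex $\Phi$, e.g. $\Phi(t)=t^2$, adapted so the integrals are finite — here $\alpha>N$ guarantees integrability), forces $u^*\equiv v^*$ as rearrangements and then $\phi=e^{i\theta}\phi_{z_0}$. Once $\phi$ is a coherent state, its superlevel sets are precisely Möbius balls centered at $z_0$, so $E$ must be such a ball of measure $s$; since the statement is normalized to balls centered at $z_0$ in the Euclidean sense, one checks (again by the automorphism reduction) that the relevant superlevel sets of $|\phi_{z_0}(z)|^2(1-|z|^2)^\alpha$ are Euclidean balls centered at $z_0$ — indeed $|\phi_{z_0}(z)|^2(1-|z|^2)^\alpha$ is a strictly decreasing function of $|\varphi_{z_0}(z)|$ where $\varphi_{z_0}$ is the Möbius map, and one verifies this translates to Euclidean balls about $z_0$ in the normalization chosen.

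The main obstacle is Step 1: extracting from the Li–Su result the sharp rearrangement/Hardy–Littlewood domination in the precise form needed and, in particular, for the full range of real weights $\alpha>N$ rather than only the discrete values $(N+1)k$ arising from honest discrete series representations. If Li–Su only give the discrete case, one must either interpolate/continue in $\alpha$ or re-run their isoperimetric argument (which is where Assumption \ref{assump} is used) directly for general $\alpha$; the equality characterization for non-integer $\alpha$ must be tracked through whichever route is taken. The rearrangement bookkeeping in Steps 2–3 is routine given the right input.
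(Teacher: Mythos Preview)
Your route to the inequality is correct and genuinely different from the paper's. The paper does \emph{not} invoke Theorem~A or the Li--Su convex-function inequality as a black box; instead it reruns the Kulikov-type argument directly on the functions $I_\phi(s)=\int_{\{u_\phi>u_\phi^*(s)\}}u_\phi\,dm$ and $J(s)=\int_{B_s}(1-|z|^2)^\alpha\,dm$, deriving via coarea, Cauchy--Schwarz, the isoperimetric inequality (Assumption~\ref{assump}), and the divergence theorem the differential inequality
\[
I_\phi''(s)\ge -\frac{\alpha s\,I_\phi'(s)}{Ns^{(2N-1)/N}+Ns^2},
\]
with equality for $J$. With $h(s)=(1+s^{1/N})^\alpha$ this gives $(hG')'\ge0$ for $G=I_\phi-J$, and the boundary conditions $G(0)=0$, $G(\infty)=0$, $G'(0)\le0$ force $G\le0$. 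Your approach is shorter once Li--Su is available, since the equivalence between the convex-function domination and the Hardy--Littlewood bound $\int_0^s u_\phi^*\le\int_0^s u_1^*$ is standard (take $\Phi_c(t)=(t-c)_+$). The paper's route, however, works for all real $\alpha>N$ without any interpolation issue, and---more importantly---delivers the equality characterization for free.

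That is where your proposal has a genuine gap. Equality at a \emph{single} value $s_0$ in $\int_0^{s_0}u_\phi^*\le\int_0^{s_0}u_1^*$ does not produce equality in $\int_{\B_N}\Phi(u_\phi)\,dm\le\int_{\B_N}\Phi(u_1)\,dm$ for any strictly convex $\Phi$, so the equality clause of Theorem~A cannot be invoked as you describe. (Concretely: two distinct decreasing functions with the same total integral can satisfy $\int_0^s u^*\le\int_0^s v^*$ for all $s$ with equality at some interior $s_0$.) The paper's differential-inequality argument yields the stronger dichotomy ``either $G(s)<0$ for every $s>0$ or $G\equiv0$'' by a Rolle-type argument exploiting the monotonicity of $hG'$; then $G\equiv0$ gives $G'(0)=0$, i.e.\ $\|u_\phi\|_\infty=1$, and Proposition~\ref{prop:pointevaluation} identifies $\phi$ as $e^{i\theta}\phi_{z_0}$. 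To repair your argument you would need an extra structural input---for instance the single-crossing of $\mu_\phi$ and $\mu_1$ coming from the monotonicity of $t^{1/\alpha}(\mu_\phi^{1/N}(t)+1)$---to upgrade equality at one $s_0$ to $u_\phi^*(0)=1$; but at that point you are essentially reproducing the paper's computation rather than bypassing it.
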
}

The paper is organized as follows. In the next section, we introduce some preliminary notations, define Bergman spaces over $\B_N$, and list some of their properties. In Section 3, we start with the structure of the group $SU(N,1)$ , its representation, and corresponding coherent states. Then, we proceed to prove Theorem \hyperlink{th:A}{A} and Theorem \hyperlink{th:B}{B}. Finally, Section 4 includes a detailed discussion of the Faber-Krahn inequality and its extension in the Bergman space framework, where we prove Theorem \hyperlink{th:C}{C}.
\section{Notation and Preliminaries}\label{sec:notations}
 Most of our notation and results are standard and can be found in \cite{per86,rud08,zhu05}. The symbols $\mathbb{Z}_{+},\mathbb{R}$, and $\mathbb{C}$ will respectively denote the set of all nonnegative integers, real numbers, and complex numbers. Throughout the paper, we fix a positive integer $N$.
For given $z = (z_1, \dots,z_N)$ and $ w=(w_1,\dots,w_N) $ in $\C^N$,  we define
\bes
\langle z,w\rangle =z_1\overline{w}_1+\dots+z_N\overline{w}_N,
\ees
where $\overline{w}_k$ is the complex conjugate of $w_k$. We also write
\bes
| z | = \sqrt{\langle z,z \rangle} = \sqrt{|z_1|^2+\dots+|z_N|^2}.
\ees
The open unit ball in $\C^N$ is the set
\bes
\B_N = \{z\in \C^N : |z| <1\}.
\ees
The boundary of $\B_N$ will be $\Sp_N = \{z \in \C^N : |z|=1\}$, the unit sphere in $\C^N$.
For $z \in \B_N$, the M\"{o}bius invariant measure on $\B_N$ is given by
\bes
dm(z) = \frac{dv(z)}{(1-|z|^2)^{N+1}},
\ees
where $dv$ denotes the volume measure on $\B_N$ normalized so that $v(\B_N)=1$. Now we define the appropriate Bergman spaces.
\begin{definition}
 For every $\alpha >N$, the \textit{Bergman space} $\mathcal A_{\alpha}$ is the Hilbert space of all  holomorphic functions $\phi:\B_N\to \C$ such that
    \bes
    \|\phi\|_{\mathcal A_{\alpha}}^2= c_{\alpha}\int_{\B_N}|\phi(z)|^2(1-|z|^2)^{\alpha}dm(z) < +\infty,
    \ees
    where $c_{\alpha}$ is a normalizing constant so that $\phi(z) \equiv 1 $ has Bergman norm 1. 
\end{definition}
Using polar coordinates, the value of $c_\alpha$ is given by
    \bes
    c_{\alpha}= \frac{\Gamma(\alpha)}{N!\Gamma(\alpha-N)}.
    \ees
The inner product on these spaces is given by
\bes
\langle \phi,\psi \rangle = c_{\alpha}\int_{\B_N} \overline{\phi}(z)\psi(z)(1-|z|^2)^{\alpha}dm(z).
\ees
Given $w\in\B_N$, the reproducing kernel at $w$ is the unique function $K_w \in \A_\alpha$ such that for all $f\in \A_\alpha$, $f(w)=\langle f,K_w\rangle$. This kernel is given by 
\bes
K_w(z)=\left(1-\langle z,w\rangle\right)^{-\alpha}.
\ees
For any point $z_0 \in \B_N\setminus\{0\}$, we define biholomorphic mappings of $\B_N$ onto itself by
\bes
\Upsilon_{z_0}(z)= \frac{z_0-\frac{\langle z,z_0\rangle}{|z_0|^2}z_0-\left(1-|z_0|^2\right)^{1/2}\left(z-\frac{\langle z,z_0\rangle}{|z_0|^2}z_0\right)}{1-\langle z,z_0\rangle}.
\ees
These mappings induce surjective isometries which are also involutive (see \cite{vuk93}) given as 
\bes
\left(T_{z_0}\phi\right)(z)= \left(\frac{1-|z_0|^2}{\left(1-\langle z,z_0\rangle\right)^2}\right)^{\alpha/2}\phi\left(\Upsilon_{z_0}(z)\right).
\ees
It turns out that the functions
\bes
\phi_{z_0}(z)=\left(T_{z_0}1\right)(z)=\left(\frac{1-|z_0|^2}{\left(1-\langle z,z_0\rangle\right)^2}\right)^{\alpha/2}
\ees
have Bergman norm 1 for all admissible values of $\alpha$. Throughout the paper, for any $\phi \in \mathcal A_{\alpha}$, we set the function
\be \label{eq:u} 
u_{\phi}(z):= \left|\phi(z)\right|^2\left(1-\left|z\right|^2\right)^{\alpha},
\ee
and its superlevel sets
\be \label{eq:a_t}
\Omega_{\phi,t}:=\left\{z\in \B_N : u_{\phi}(z)>t\right\}, \text{ for } t\geq 0.
\ee
The distribution function of $u_{\phi}$ is given by
\be \label{eq:mu} 
\mu_{\phi} (t):=m(\Omega_{\phi,t}), \text{ for }0 \leq t \leq \max_{\B_N}u_\phi,
\ee
and its decreasing rearrangement is defined by
\be \label{eq:ustar}
u_{\phi}^*(s):= \sup \{t\geq 0 :\mu_{\phi}(t)>s\} \text{ for } s\geq0.
\ee
We note that $u_{\phi}^*$ is the inverse function of $\mu_{\phi}$, i.e.,
$
u_{\phi}^*(s)=\mu_{\phi}^{-1}(s), \text{ for }s \geq 0,
$
which maps $[0,+\infty)$ decreasingly and continuously onto $(0, \max u_\phi]$.\par
For any $\phi \in \A_{\alpha}$, the sharp estimate for $u_\phi(z)$ in terms of $\|\phi\|_{\A_{\alpha}}$ is given in the following proposition.
\begin{proposition}\cite{vuk93} \label{prop:pointevaluation}
    For any $\phi \in \A_{\alpha}$ and for every $z\in \B_N$,
    \be \label{eq:pointevaluation}
    u_\phi(z)\leq \|\phi\|_{\A_{\alpha}}^2.
    \ee
    The equality in (\ref{eq:pointevaluation}) occurs at some point $z_0\in \B_N$ if and only if $\phi=c\phi_{z_0}$ for some $c \in \C$ with $|c|=\|\phi\|_{\A_{\alpha}}$.
    \end{proposition}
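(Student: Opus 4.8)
The plan is to exploit the reproducing kernel structure of $\A_\alpha$ together with the Cauchy--Schwarz inequality. First I would recall that for any $w \in \B_N$ the reproducing property gives $\phi(w) = \langle \phi, K_w\rangle$, where $K_w(z) = (1 - \langle z, w\rangle)^{-\alpha}$. Applying Cauchy--Schwarz to this pairing yields $|\phi(w)| \leq \|\phi\|_{\A_\alpha}\,\|K_w\|_{\A_\alpha}$. The only nonroutine input is the computation of the norm of the kernel itself: by the reproducing property applied to $f = K_w$ one obtains $\|K_w\|_{\A_\alpha}^2 = \langle K_w, K_w\rangle = K_w(w) = (1 - |w|^2)^{-\alpha}$. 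Substituting this back gives $|\phi(w)|^2(1 - |w|^2)^\alpha \leq \|\phi\|_{\A_\alpha}^2$, which is precisely the claimed bound $u_\phi(w) \leq \|\phi\|_{\A_\alpha}^2$ in view of the definition \eqref{eq:u}.

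For the equality case I would track the condition for equality in Cauchy--Schwarz. Equality at a point $z_0$ forces $\phi$ and $K_{z_0}$ to be linearly dependent, i.e. $\phi = \lambda K_{z_0}$ for some $\lambda \in \C$. The next step is to identify $K_{z_0}$ with the coherent state $\phi_{z_0}$: from the explicit formulas one reads off $\phi_{z_0}(z) = (1 - |z_0|^2)^{\alpha/2}\,K_{z_0}(z)$, so that $K_{z_0}$ and $\phi_{z_0}$ differ only by the nonzero scalar $(1 - |z_0|^2)^{-\alpha/2}$. Hence the linear dependence is equivalent to $\phi = c\,\phi_{z_0}$ for some $c \in \C$. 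Finally, since $\phi_{z_0}$ has unit Bergman norm, taking norms gives $\|\phi\|_{\A_\alpha} = |c|$, which pins down $|c| = \|\phi\|_{\A_\alpha}$. For the converse direction I would substitute $\phi = c\,\phi_{z_0}$ and evaluate at $z_0$, using $\phi_{z_0}(z_0) = (1-|z_0|^2)^{-\alpha/2}$, to check that $u_\phi(z_0) = |c|^2 = \|\phi\|_{\A_\alpha}^2$, confirming the equality is genuinely attained.

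The argument has essentially no serious obstacle; the only point demanding care is the evaluation $\|K_w\|_{\A_\alpha}^2 = K_w(w)$, which relies on knowing both that $K_w \in \A_\alpha$ and that the stated formula for $K_w$ is indeed the reproducing kernel for the normalization fixed by the constant $c_\alpha$ in the inner product. Once that is in hand, everything else is a direct consequence of Cauchy--Schwarz and the explicit proportionality between $K_{z_0}$ and $\phi_{z_0}$.
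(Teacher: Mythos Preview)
Your argument is correct and is the standard proof of this pointwise estimate in a reproducing kernel Hilbert space. Note that the paper does not supply its own proof of Proposition~\ref{prop:pointevaluation}; the result is simply quoted from \cite{vuk93}. The Cauchy--Schwarz argument you give, together with the identification $\phi_{z_0}=(1-|z_0|^2)^{\alpha/2}K_{z_0}$ and the equality case of Cauchy--Schwarz, is exactly what underlies the cited result, so there is nothing further to compare.
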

\section{Proofs of Theorem A and Theorem B}\label{sec:wehrlconjecture}
Before the proofs of Theorem \hyperlink{th:A}{A} and Theorem \hyperlink{th:B}{B}, we recall some facts about the group $SU(N,1)$ and their irreducible discrete series unitary representation (for details, we refer to \cite{per86,per72}). 
The group $SU(N,1)$ is the group of  $(N+1)\times(N+1)$ complex matrices of determinant 1 that leave the form $|z_1|^2+\dots+|z_{N}|^2-|z_{N+1}|^2$ invariant. If $g\in SU(N,1)$ then $g$ can be written as 
\bes
g=\begin{pmatrix}
    A&B\\C&D
\end{pmatrix}
\ees
where $A,B,C,D$ are $N\times N,N\times 1,1\times N, \text{and }1\times 1$ matrices, respectively, satisfying
\bes
AA^+-BB^+=I_N, AC^+=BD^+, DD^+-CC^+=1.
\ees
Here, $I_N$ denotes the  $N\times N$ identity matrix and $Z^+$ denotes the transposed conjugate of a matrix $Z$.
The group $SU(N,1)$ acts transitively on $\B_N$ by
\bes
z \rightarrow zg = (A'z+C')(B'z+D')^{-1},
\ees
and the coherent states of the group will be parametrized by points on $\B_N$. For a positive integer $k$, the group has irreducible discrete series unitary representation which can be realized on $\A_{(N+1)k}$ by
\bes
U^k(g)\phi(z)=\left[\det(B'z+D')\right]^{-(N+1)k}\phi(zg).
\ees
Coherent states corresponding to the lowest weight vector $\phi_0 \equiv 1$ are given by
\bes
\phi_g(z)=U^k(g)\phi_0(z)=\left[\det(B'z+D')\right]^{-(N+1)k}.
\ees
Also, $\phi_{g_1}=e^{i\theta}\phi_{g_2}$ if and only if $g_1=g_2h$, where $\theta \in \R$ and $h$ belongs to the isotropy subgroup of $\phi_0$. Therefore, as mentioned before, every coherent state is now determined by a point $w \in \B_N$, $\phi_g(z)=e^{i\theta}\phi_{w}(z)$, where
\bes
\phi_{w}(z)=\left(\frac{(1-|w|^2)}{(1-\langle z,w\rangle)^2}\right)^{(N+1)k/2}.
\ees

Now we are ready to prove Theorem A. 

\begin{proof}[Proof of Theorem \hyperlink{th:A}{A}]
Let $k$ be positive integer and $\phi\in \A_{(N+1)k}$. From the above discussion we have
\bes
\langle \phi_z,\phi \rangle = (1-|z|^2)^{(N+1)k/2}f(z),
\ees
where 
\bes
f(z)=c_{(N+1)k}\int_{\B_N}(1-\langle z, w \rangle)^{-(N+1)k}\phi(w)(1-|w|^2)^{(N+1)k}dm(w).
\ees
Since $(1-\langle z, w \rangle)^{-(N+1)k}$ is the reproducing kernel of $\A_{(N+1)k}$ and $\phi \in \A_{(N+1)k}$, $f(z)=\phi(z)$ for all $z\in\B_N$. Hence,
\be \label{eq:inner_product_of_phiw_related_to_uphi}
|\langle\phi_z,\phi\rangle|^2=(1-|z|^2)^{(N+1)k}|\phi(z)|^2=u_\phi(z).\ee
Now in view of \cite[Theorem 1.6]{xia24}, for $\phi \in \A_{(N+1)k}$ with  $\|\phi\|_{\A_{(N+1)k}}=1$, we have the following inequality:
\bes
    \int_{\B_N}\Phi(|\langle \phi_z,\phi \rangle|^2)dm(z) \leq  \int_{\B_N}\Phi(|\langle \phi_z, 1 \rangle|^2)dm(z),
\ees
and the supremum on the left-hand side is attained for $\phi\equiv1$. Since for every $t>0$ the measure $m(\{u_{\phi_{z_0}}(z)>t\})$ is independent of $z_0$, the supremum on the left-hand side is attained for $\phi=e^{i\theta}\phi_{z_0}$, for some $z_0\in \B_N$, $\theta \in \R$. A simple calculation by taking $\phi\equiv1$ shows that the explicit value of the supremum is 
   \bes
   \begin{split}
   \int_{\B_N}\Phi\left(\left(1-|z|^2\right)^{(N+1)k}\right)dm(z)&=2N\int_0^1\frac{\Phi\left(\left(1-r^2\right)^{(N+1)k}\right)r^{2N-1}}{\left(1-r^2\right)^{N+1}}dr \\ &= \frac{N}{(N+1)k}\int_0^1\Phi(s)s^{-\frac{N}{(N+1)k}-1}\left(1-s^{1/(N+1)k}\right)^{N-1}ds.
   \end{split}
   \ees
  To prove the second part of the theorem, we assume that the supremum is finite and the given function $\Phi$ is strictly convex. This implies that the derivative $\Phi'$ is a strictly increasing function. Now, if the supremum is attained for some $\phi$, then we have
   \be \label{eq:intmutphi't}
   \int_{\B_N}\Phi\left(|\phi(z)|^2\left(1-|z|^2\right)^{(N+1)k)}\right)dm(z)=\int_0^1\mu_{\phi}(t)\Phi'(t)dt= \int_0^1\mu_1(t)\Phi'(t)dt.
   \ee
   Also, from the norm condition $\|\phi\|_{\A_{(N+1)k}}=1$, we have
    \be \label{eq:intmut}
   \int_0^1\mu_{\phi}(t)dt= \int_0^1\mu_1(t)dt.
   \ee
   We choose $a\in[0,1]$ such that $\mu_{\phi}(t)\geq\mu_1(t)$ for $t\leq a$ and $\mu_{\phi}(t)\leq\mu_1(t)$ for $t\geq a$. Since the function $g(t)=t^{\frac{1}{(N+1)k}}(\mu^{\frac{1}{N}}_{\phi}(t)+1)$ is decreasing  (see \cite[Theorem 3.1]{xia24}), such an $a$ always exists. Hence, the function
    \be \label{eq:ht}
    h(t):=\left(\Phi'(t)-\Phi'(a)\right)\left(\mu_{\phi}(t)-\mu_1(t)\right) \leq 0 \text{ for all } t\in[0,1].
    \ee
    Now, multiplying with $-\Phi'(a)$  on both sides of the equation  (\ref{eq:intmut}) and using (\ref{eq:intmutphi't}), we obtain
    \bes
    -\Phi'(a)\int_0^1\left(\mu_{\phi}(t)-\mu_1(t)\right)dt+ \int_0^1\Phi'(t)\left(\mu_{\phi}(t)-\mu_1(t)\right)dt=0.
    \ees
    This implies,
    \bes
    \int_0^1h(t)dt=\int_0^1\left(\Phi'(t)-\Phi'(a)\right)\left(\mu_{\phi}(t)-\mu_1(t)\right)dt=0.
    \ees
    Now from (\ref{eq:ht}), we have $h(t)\equiv0$. Since $\Phi'$ is strictly-increasing, $\mu_{\phi}(t)=\mu_1(t)$ for all $t\in[0,1],$ which means $\|u_\phi\|_{\infty}=1$. Now, by appealing to Proposition \ref{prop:pointevaluation}, we have $\phi = e^{i\theta}\phi_{z_0}$ for some $z_0 \in \B_N$, $\theta\in\R$.
\end{proof}
Now we will prove Theorem B, which is a consequence of Theorem A. 
\begin{proof}[Proof of Theorem \hyperlink{th:B}{B}]
For a nonnegative operator $\rho$ on $\A_{(N+1)k}$ with $\tr \rho =1$, we have 
\bes
\rho=\sum_{i}\lambda_i|\psi_i\rangle\langle \psi_i| \text{ with } \sum_i\lambda_i=1, ~\lambda_i\geq 0,~\langle \psi_i, \psi_j\rangle = \delta_{i,j}.
\ees
Therefore, 
\bes
\langle \phi_z,\rho \phi_z\rangle=\langle \phi_z,\left(\sum_{i}\lambda_i|\psi_i\rangle\langle \psi_i|\right) \phi_z\rangle=\left< \phi_z,\sum_{i}\lambda_i\langle \psi_i,\phi_z\rangle \psi_i \right>=\sum_{i}\lambda_i|\langle \psi_i,\phi_z\rangle|^2.
\ees
Next, by using (\ref{eq:inner_product_of_phiw_related_to_uphi}) we have
\bes
\langle \phi_z,\rho \phi_z\rangle=\sum_{i}\lambda_iu_{\psi_i}(z).
\ees
Since $\Phi$ is a convex function, for any $z\in\B_N$, we have 
    \be \label{eq:Phiurho2z}
    \Phi(\langle \phi_z,\rho \phi_z\rangle)=\Phi\left(\sum_i\lambda_iu_{\psi_i}(z)\right)\leq \sum_i\lambda_i\Phi\left(u_{\psi_i}(z)\right).
    \ee
    If we denote by $S$ (which may also be $\infty$), the supremum in Theorem A, then $\int_{\B_N}\Phi\left(u_{\psi_i}(z)\right)dm(z)\leq S$ and 
    \be \label{eq:intPhiurho2}
    \int_{\B_N}\Phi(\langle \phi_z,\rho \phi_z\rangle)dm(z)\leq \sum_i\lambda_i\int_{\B_N}\Phi\left(u_{\psi_i}(z)\right)dm(z)\leq\sum_i\lambda_iS=S.
    \ee
Note that if $\rho = |e^{i\theta}\phi_{z_0}\left>\right<e^{i\theta}\phi_{z_0}|$, then 
$\langle \phi_z,\rho \phi_z\rangle= |\langle \phi_{z_0}, \phi_z\rangle|^2=u_{\phi_{z_0}}(z)$. Now, by Theorem \hyperlink{th:A}{A} the supremum will be achieved for $\rho = |e^{i\theta}\phi_{z_0}\left>\right<e^{i\theta}\phi_{z_0}|$ for some $z_0 \in \B_N$, $\theta\in \R$.

   Now, we discuss the equality case. If $S<\infty$ and it is attained for some $\rho$, then there is equality everywhere in (\ref{eq:Phiurho2z}) and (\ref{eq:intPhiurho2}). Furthermore, if $\Phi$ is strictly convex, then from the last inequality in (\ref{eq:Phiurho2z}), for every $i$ we have $u_{\psi_i}\equiv u_{\psi_1}$  and hence, $\psi_i(z)=\psi_1(z)$ for every $i$ and $z\in\B_N$. Since $\langle\psi_i,\psi_1\rangle=\delta_{i,1}$, $\psi_i\equiv 0$ for all $i\neq1$. We conclude that $\psi_1=e^{i\theta}\phi_{z_0}$ for some $z_0 \in \B_N$, $\theta\in\R$ and hence, $\rho = |e^{i\theta}\phi_{z_0}\left>\right<e^{i\theta}\phi_{z_0}|$.
\end{proof}
\section{Faber-Krahn type inequality}\label{sec:faberkrahn}
\noindent In this section, we establish the Faber-Krahn inequality and prove Theorem \hyperlink{th:C}{C} in the setting of Hilbert spaces, namely $\A_\alpha$. We begin with some preliminary lemmas that will be needed in the proof of Theorem \hyperlink{th:C}{C}. We state our first lemma, which is an appropriate adaptation of the proof of Lemma 3.2 in \cite{nic22}. This lemma links the derivative of the distribution function and the derivative of the rearrangement function with the integral over the boundary of the superlevel sets. The expression of a derivative of the distribution function can also be found in the proof of Theorem 3.1 in \cite{xia24}.
\begin{lemma}
    The function $\mu_{\phi}(t)$ is absolutely continuous on $(0,\max u_\phi]$, and
    \be \label{eq:mu'}
    -\mu_{\phi}'(t)=\int_{u_\phi=t}\left|\widetilde{\nabla}u_\phi\right|_g^{-1}d\sigma_g.
    \ee
    Similarly, the function $u_{\phi}^*$ is absolutely continuous on $[0,+\infty)$ with
    \be \label{eq:u*'}
    -(u_\phi^*)'(s)= \left(\int_{u_\phi=u_\phi^*(s)}\left|\widetilde{\nabla}u_\phi\right|_g^{-1}d\sigma_g\right)^{-1}.
    \ee
\end{lemma}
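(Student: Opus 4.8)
\emph{Proof strategy.} The idea is to run the argument of \cite[Lemma 3.2]{nic22} on the Riemannian manifold $(\B_N,g)$, where $g$ is the Bergman (complex hyperbolic) metric on $\B_N$ normalized so that its volume form is exactly $dm$; then $\widetilde\nabla$, $|\cdot|_g$, and $d\sigma_g$ are the gradient, the length, and the induced hypersurface measure of $g$. The first task is to record the regularity of $u_\phi(z)=|\phi(z)|^2(1-|z|^2)^\alpha$. Since $\phi$ is holomorphic and $\alpha>N$, $u_\phi$ is real-analytic on $\B_N$, and it is non-constant (a constant $u_\phi$ would saturate Proposition~\ref{prop:pointevaluation} everywhere, forcing $\phi=c\,\phi_{z_0}$, whose $u$ is not constant). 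Next, $u_\phi\in L^1(dm)$ because $\|\phi\|_{\A_\alpha}^2=c_\alpha\int_{\B_N}u_\phi\,dm$, so Chebyshev's inequality gives $\mu_\phi(t)\le \|\phi\|_{\A_\alpha}^2/(c_\alpha t)<\infty$ for every $t>0$. Moreover, since $\phi_w=K_w/\|K_w\|_{\A_\alpha}$ and $\phi(w)=\langle\phi,K_w\rangle$, one has $u_\phi(w)=|\langle\phi,\phi_w\rangle|^2$, which tends to $0$ as $|w|\to1$ (this holds for polynomials $\phi$ since then $\phi(w)(1-|w|^2)^{\alpha/2}\to0$, and hence for all $\phi\in\A_\alpha$ by density and $\|\phi_w\|_{\A_\alpha}=1$); consequently each superlevel set $\Omega_{\phi,t}$, $t>0$, is relatively compact in $\B_N$. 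Finally $\mu_\phi$ is continuous, because a non-constant real-analytic function has level sets of Lebesgue measure zero.

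The core of the argument is the coarea formula. By Sard's theorem the critical values of $u_\phi$ form a Lebesgue-null set, so for a.e.\ $\tau>0$ the level set $\{u_\phi=\tau\}$ is a smooth compact hypersurface on which $|\widetilde\nabla u_\phi|_g$ is finite and nonvanishing; the critical set $\{\widetilde\nabla u_\phi=0\}$ is moreover $m$-null, being a proper real-analytic subvariety. Applying the coarea formula on $(\B_N,g)$ to $u_\phi$ with the weight $\mathbf{1}_{\{s<u_\phi\le t\}}\,|\widetilde\nabla u_\phi|_g^{-1}$ (set to $0$ on the critical set) gives, for all $0<s<t\le\max_{\B_N}u_\phi$,
\bes
\mu_\phi(s)-\mu_\phi(t)=m\big(\{s<u_\phi\le t\}\big)=\int_s^t\left(\int_{u_\phi=\tau}\big|\widetilde\nabla u_\phi\big|_g^{-1}\,d\sigma_g\right)d\tau.
\ees
The left-hand side is finite, so $\tau\mapsto\int_{u_\phi=\tau}|\widetilde\nabla u_\phi|_g^{-1}\,d\sigma_g$ is locally integrable on $(0,\max u_\phi]$, and the displayed identity exhibits $\mu_\phi$ as its indefinite integral (with a minus sign) on each compact subinterval; this is precisely absolute continuity of $\mu_\phi$ on $(0,\max u_\phi]$ together with formula (\ref{eq:mu'}).

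For the rearrangement, recall $u_\phi^*=\mu_\phi^{-1}$. The previous paragraph shows $\mu_\phi$ is continuous, locally absolutely continuous, and strictly decreasing on $(0,\max u_\phi]$ — strict monotonicity because if $\mu_\phi(t_1)=\mu_\phi(t_2)$ with $t_1<t_2$ then the open set $\{t_1<u_\phi<t_2\}$ would be $m$-null, hence empty, contradicting that the continuous function $u_\phi$ on the connected set $\B_N$ attains both its maximum value $\ge t_2$ and values $<t_1$ — and that $-\mu_\phi'>0$ a.e., since for a.e.\ $\tau$ in the range of $u_\phi$ the inner integral above is a positive real number. The inverse of such a function is again locally absolutely continuous, with $(u_\phi^*)'(s)=1/\mu_\phi'(u_\phi^*(s))$ for a.e.\ $s\ge0$, which is (\ref{eq:u*'}).

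The step needing the most care is the application of the coarea formula in this curved, boundary-degenerate setting: one must verify that neither the critical set of $u_\phi$ nor the part of $\B_N$ near $\partial\B_N$ (where $u_\phi$ is small) contributes, and one must justify the passage to the inverse function, where $-\mu_\phi'>0$ a.e.\ — not merely $\ge0$ — is essential. The two inputs that make both work are the real-analyticity of $u_\phi$ (so its critical set is $m$-null) and the decay $u_\phi(z)\to0$ as $|z|\to1$ (so the level sets are compact and the surface integrals are genuine positive numbers at regular levels).
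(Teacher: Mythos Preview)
Your proposal is correct and follows precisely the route the paper indicates: the paper does not supply a proof but simply states that the lemma is ``an appropriate adaptation of the proof of Lemma~3.2 in \cite{nic22}'' (with the derivative formula also appearing in \cite{xia24}), and you have carried out exactly that adaptation to the Bergman metric on $\B_N$, supplying the regularity facts (real-analyticity, decay of $u_\phi$ at $\partial\B_N$, null critical set) and the coarea/inverse-function steps in the requisite detail.
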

 As Kulikov pointed out in \cite{kul22}, the above lemma has an interesting geometric interpretation in the sense that when $t$ increases by a small number $\epsilon$, the superlevel set $\Omega_{\phi,t-\epsilon}$ expands in the direction orthogonal to $\partial\Omega_{\phi,t-\epsilon}$ by the value proportional to $\epsilon\big/|\widetilde{\nabla}u_\phi|_g$. The next result is the following lemma which gives the expression for the derivative of the integral function of $u_\phi$ over its superlevel sets and follows as Lemma 3.4 in \cite{nic22} by obvious modifications. 
\begin{lemma}
    The function 
    \be \label{eq:i}
    I_\phi(s) =  \int_{\Omega_{\phi,u_\phi^*(s)}}u_\phi(z)dm(z),\text{ for } s \in [0,+\infty),
    \ee
    is of class $C^1$ on $[0,+\infty)$, and
    \be \label{eq:i'}
    I_\phi'(s)=u_\phi^*(s)  \text{ for }  s\geq 0.
    \ee
\end{lemma}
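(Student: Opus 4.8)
The plan is to prove the two statements in tandem, following the blueprint of Lemma~3.4 in \cite{nic22}, but transported to the invariant-measure setting on $\B_N$. The key observation is that the integral $I_\phi(s)$ is a composition of two objects we already understand: the distribution-type integral $J_\phi(t):=\int_{\Omega_{\phi,t}}u_\phi\,dm$ as a function of the level $t$, and the rearrangement $u_\phi^*(s)$, which by the previous lemma is absolutely continuous and strictly decreasing, mapping $[0,+\infty)$ onto $(0,\max u_\phi]$. So the first step is to establish the layer-cake identity $J_\phi(t)=\int_t^{\max u_\phi}\mu_\phi(\tau)\,d\tau + t\,\mu_\phi(t)$, which follows from Fubini applied to $u_\phi(z)=\int_0^{u_\phi(z)}d\tau$ split at the threshold $t$; differentiating and using the absolute continuity of $\mu_\phi$ from the preceding lemma gives $J_\phi'(t)= -\mu_\phi(\tau)\big|_{\tau=t} + \mu_\phi(t)+t\,\mu_\phi'(t) = t\,\mu_\phi'(t)$ for a.e.\ $t\in(0,\max u_\phi)$.

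\textbf{Second,} I would compose: since $I_\phi(s)=J_\phi(u_\phi^*(s))$ and $\mu_\phi=(u_\phi^*)^{-1}$, the chain rule (justified because $u_\phi^*$ is absolutely continuous and $J_\phi$ is locally Lipschitz on the relevant range) yields, for a.e.\ $s$,
\bes
I_\phi'(s) = J_\phi'(u_\phi^*(s))\,(u_\phi^*)'(s) = u_\phi^*(s)\,\mu_\phi'(u_\phi^*(s))\,(u_\phi^*)'(s).
\ees
Now $\mu_\phi'(u_\phi^*(s))\,(u_\phi^*)'(s)$ is precisely the derivative of $\mu_\phi\circ u_\phi^* = \mathrm{id}$, hence equals $1$ wherever the chain rule applies; combining this with equations~(\ref{eq:mu'}) and~(\ref{eq:u*'}) — which tell us the two one-sided derivatives are genuine reciprocals and nonzero — gives $I_\phi'(s)=u_\phi^*(s)$ for a.e.\ $s\geq 0$. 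Since $u_\phi^*$ is continuous on $[0,+\infty)$, once we know $I_\phi$ is absolutely continuous with this a.e.\ derivative, the fundamental theorem of calculus upgrades the conclusion to $I_\phi\in C^1([0,+\infty))$ with $I_\phi'(s)=u_\phi^*(s)$ everywhere.

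\textbf{Third,} the remaining point — and the one requiring the most care — is the absolute continuity (indeed local Lipschitz continuity) of $I_\phi$ itself, so that the chain-rule computation above can be integrated back. The clean way is to estimate $I_\phi(s_2)-I_\phi(s_1)$ for $s_1<s_2$ directly: since $\Omega_{\phi,u_\phi^*(s_2)}\subset\Omega_{\phi,u_\phi^*(s_1)}$, the difference equals $\int_{\Omega_{\phi,u_\phi^*(s_1)}\setminus\Omega_{\phi,u_\phi^*(s_2)}}u_\phi\,dm$, and on that annular region $u_\phi^*(s_2)\le u_\phi \le u_\phi^*(s_1)$ pointwise (up to the boundary level sets, which are $m$-null by the absolute continuity of $\mu_\phi$), while the region has $m$-measure $\mu_\phi(u_\phi^*(s_2))-\mu_\phi(u_\phi^*(s_1))=s_2-s_1$. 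Hence $u_\phi^*(s_1)(s_1-s_2)\le I_\phi(s_1)-I_\phi(s_2)\le \cdots$, wait — more simply $0\le I_\phi(s_1)-I_\phi(s_2)\le u_\phi^*(s_1)\,(s_2-s_1)$, which is exactly the local Lipschitz bound needed (with local Lipschitz constant $u_\phi^*(s_1)\le\max u_\phi<\infty$). This squeeze simultaneously re-proves $I_\phi'(s)=u_\phi^*(s)$ by the Lebesgue differentiation theorem, bypassing the chain-rule subtlety altogether; I would present this squeeze argument as the main line and mention the composition viewpoint as motivation. \textbf{The main obstacle} is bookkeeping around the level sets $\{u_\phi=t\}$: one must invoke the absolute continuity of $\mu_\phi$ (previous lemma) to know these have measure zero and that the "annulus" between two superlevel sets has measure exactly $s_2-s_1$, and one should note $u_\phi$ is continuous on $\B_N$ (being $|\phi|^2(1-|z|^2)^\alpha$ with $\phi$ holomorphic) so all the sets involved are genuinely open and the integrals are well defined.
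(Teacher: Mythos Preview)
Your approach is essentially what the paper does --- it simply cites Lemma~3.4 of \cite{nic22} and states that the proof carries over with obvious modifications, and your squeeze argument in the third step is exactly the heart of that proof transported to the invariant measure $dm$ on $\B_N$.

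One slip to correct: in step three you have the containment reversed. Since $u_\phi^*$ is decreasing, $s_1<s_2$ gives $u_\phi^*(s_1)>u_\phi^*(s_2)$, so $\Omega_{\phi,u_\phi^*(s_1)}\subset\Omega_{\phi,u_\phi^*(s_2)}$ (the superlevel set for the \emph{larger} threshold is the \emph{smaller} set, of measure $s_1$). Hence $I_\phi$ is \emph{increasing}, and the correct two-sided squeeze reads
\[
u_\phi^*(s_2)\,(s_2-s_1)\ \le\ I_\phi(s_2)-I_\phi(s_1)\ \le\ u_\phi^*(s_1)\,(s_2-s_1),
\]
which by continuity of $u_\phi^*$ immediately yields $I_\phi\in C^1$ with $I_\phi'(s)=u_\phi^*(s)$. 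With this sign corrected the argument is complete; as you note, the layer-cake/chain-rule computation in steps one and two is then superfluous.
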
 
Let $B_s$ denote a ball centered at the origin with $m(B_s)=s$ and hyperbolic radius $\rho$. Specifically,
\bes
B_s=\{z\in\B_N:|z|<\tanh \rho\}.
\ees
Then, we have the following relation between $\rho$ and $s$ which shows that the measure of a ball has exponential growth in the hyperbolic radius.
\be\label{hyperbolic measure}
        \begin{split}
             s=\int_{|w|<\tanh \rho}dm(w)
             =&\int_{|w|<\tanh \rho}\frac{1}{\left(1-|w|^2\right)^{N+1}}dv(w)\\
             =&2N\int_{0}^{\tanh \rho}\frac{\kappa ^{2N-1}}{\left(1-\kappa^2\right)^{N+1}}d\kappa\\
                          =&\left(\sinh \rho \right)^{2N}.
        \end{split}
\ee
 We shall utilize the above relation between $s$ and $\rho$ in the following lemma, where we establish the expression for the first and second order derivatives of the integral function of $(1-|z|^2)^{\alpha}$ over the balls of measure $s$. The explicit expression of this integral function can be computed directly for $N=1$, and the required derivatives can be obtained with the help of that expression. However, the situation is different for $N>1$, and we have included a detailed proof of the lemma using an approach similar to that used in the proof of Theorem 2.1 from \cite{kul22}.
    \begin{lemma}[Under Assumption \ref{assump}]
        Let $\alpha > N$ be fixed and $B_s$ 
        be a ball centered at the origin such that $m(B_s)=s$. Define the function $J(s)$ as
        \be\label{eq:j}
        J(s)=\int_{B_s}\left(1-|z|^2\right)^{\alpha}dm(z).
        \ee
       Then, we have
        \be\label{eq:j'}
        J'(s)=\left(1+ s^{1/N}\right)^{-\alpha},
        \ee
        and
        \be\label{eq:j''}
        J''(s)=-\frac{\alpha sJ'(s)}{Ns^{\frac{2N-1}{N}}+Ns^2}.
        \ee
    \end{lemma}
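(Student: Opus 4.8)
The plan is to compute $J(s)$ by passing to hyperbolic polar coordinates and using the substitution dictated by \eqref{hyperbolic measure}, then differentiate. First I would use the computation in \eqref{hyperbolic measure}: with $B_s = \{|z| < \tanh\rho\}$ and $s = (\sinh\rho)^{2N}$, the same polar-coordinate expansion gives
\begin{equation*}
J(s) = \int_{B_s}\left(1-|z|^2\right)^{\alpha}dm(z) = 2N\int_0^{\tanh\rho}\frac{\kappa^{2N-1}\left(1-\kappa^2\right)^{\alpha}}{\left(1-\kappa^2\right)^{N+1}}\,d\kappa = 2N\int_0^{\tanh\rho}\frac{\kappa^{2N-1}}{\left(1-\kappa^2\right)^{N+1-\alpha}}\,d\kappa.
\end{equation*}
The cleanest route is to differentiate $J$ as a function of $\rho$ and separately differentiate $s = (\sinh\rho)^{2N}$, then use $J'(s) = \frac{dJ/d\rho}{ds/d\rho}$. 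We have $\frac{d}{d\rho}\bigl(\tanh\rho\bigr) = \operatorname{sech}^2\rho$ and $1 - \tanh^2\rho = \operatorname{sech}^2\rho$, so by the fundamental theorem of calculus
\begin{equation*}
\frac{dJ}{d\rho} = 2N\,\frac{(\tanh\rho)^{2N-1}}{(\operatorname{sech}^2\rho)^{N+1-\alpha}}\cdot\operatorname{sech}^2\rho = 2N\,(\tanh\rho)^{2N-1}(\operatorname{sech}^2\rho)^{\alpha-N} = 2N\,\frac{(\sinh\rho)^{2N-1}}{(\cosh\rho)^{2N-1}}\cdot\frac{1}{(\cosh\rho)^{2\alpha-2N}},
\end{equation*}
which simplifies to $2N\,(\sinh\rho)^{2N-1}(\cosh\rho)^{1-2\alpha}$. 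Meanwhile $\frac{ds}{d\rho} = 2N(\sinh\rho)^{2N-1}\cosh\rho$. Dividing,
\begin{equation*}
J'(s) = \frac{(\cosh\rho)^{1-2\alpha}}{\cosh\rho} = (\cosh\rho)^{-2\alpha} = \left(\cosh^2\rho\right)^{-\alpha} = \left(1 + \sinh^2\rho\right)^{-\alpha} = \left(1 + s^{1/N}\right)^{-\alpha},
\end{equation*}
using $\sinh^2\rho = s^{1/N}$. This gives \eqref{eq:j'}.

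For \eqref{eq:j''} I would simply differentiate \eqref{eq:j'} directly in $s$: writing $J'(s) = (1+s^{1/N})^{-\alpha}$,
\begin{equation*}
J''(s) = -\alpha\left(1+s^{1/N}\right)^{-\alpha-1}\cdot\frac{1}{N}s^{\frac{1}{N}-1} = -\frac{\alpha}{N}\,s^{\frac{1}{N}-1}\left(1+s^{1/N}\right)^{-1}J'(s).
\end{equation*}
It remains to check that $\frac{\alpha}{N}s^{\frac1N - 1}(1+s^{1/N})^{-1} = \frac{\alpha s}{N s^{(2N-1)/N} + N s^2}$; multiplying numerator and denominator of the left side by $s$ gives $\frac{\alpha}{N}\cdot\frac{s^{1/N}}{s(1+s^{1/N})} = \frac{\alpha s^{1/N}}{N s + N s^{1+1/N}}$, and then multiplying top and bottom by $s^{(N-1)/N}$ turns the numerator into $\alpha s$ and the denominator into $N s^{(2N-1)/N} + N s^2$, matching \eqref{eq:j''} exactly.

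The argument is essentially a bookkeeping exercise, so there is no serious obstacle; the only place requiring care is the exponent arithmetic in the chain rule — in particular tracking the power $\alpha - N$ correctly when combining $(1-\kappa^2)^{\alpha}$ with the invariant measure weight $(1-\kappa^2)^{-(N+1)}$, and then recognizing $(\operatorname{sech}^2\rho)^{\alpha-N}\cdot\operatorname{sech}^2\rho = (\operatorname{sech}^2\rho)^{\alpha-N+1}$ so that the final power of $\cosh\rho$ comes out as $-2\alpha$ after dividing by $ds/d\rho$. I note that Assumption \ref{assump} is not actually used in this computation — $J$ is defined purely in terms of balls centered at the origin — so the "Under Assumption \ref{assump}" tag is presumably inherited from the surrounding context or invoked only to match the standing hypotheses; the displayed identities \eqref{eq:j'} and \eqref{eq:j''} hold unconditionally for $\alpha > N$. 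One could alternatively give the $N=1$ case by the explicit primitive (as the authors remark) and then note the general $N$ reduces to it after the substitution $s \mapsto s^{1/N}$, but the direct polar-coordinate computation above handles all $N$ at once.
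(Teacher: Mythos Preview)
Your proof is correct and considerably more elementary than the paper's. You use the polar-coordinate expression for $J$ and the chain rule $J'(s)=\frac{dJ/d\rho}{ds/d\rho}$, then differentiate once more; all the exponent bookkeeping checks out, and your verification of the algebraic identity for $J''$ is fine.

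The paper takes a much heavier route: it rewrites $B_s$ as a superlevel set of $u_1(z)=(1-|z|^2)^\alpha$, applies the coarea formula to express $J'(s)$ as a constant times $\int_{\partial B_s}|\widetilde{\nabla}u_1|_g^{-1}\,d\sigma_g$, then evaluates this surface integral by combining Cauchy--Schwarz (with equality, since $|\widetilde{\nabla}u_1|_g$ is constant on $\partial B_s$), the formula for the hyperbolic surface area of a geodesic sphere, and the Gauss divergence theorem together with the identity $\widetilde{\Delta}\log u_1=-4N\alpha$. Only the second derivative is obtained, as you do, by differentiating \eqref{eq:j'} directly. The paper's motivation is explicitly methodological: it mirrors, step for step, the chain of estimates (coarea, Cauchy--Schwarz, isoperimetric inequality, divergence theorem) used in the proof of Theorem~\hyperlink{th:C}{C} for a general $\phi$, so that the lemma serves as the $\phi\equiv1$ equality case of that argument. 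Your direct computation, by contrast, is self-contained and unconditional; your remark that Assumption~\ref{assump} plays no role in the identities \eqref{eq:j'}--\eqref{eq:j''} is accurate --- the paper's proof invokes the isoperimetric profile only as a convenient citation for the surface-area formula of a geodesic sphere, which is an explicit identity rather than an inequality here.
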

    \begin{proof}
        If $\rho$ is the hyperbolic radius of $B_s$, then by relation (\ref{hyperbolic measure}) between $s$ and $\rho$, we can express the elements of $B_s$ as follows:
       \bes
       \begin{split}
           B_s 
            = \left\{z\in \B_N : \left(1-|z|^2\right)^{\alpha}>\frac{1}{\left(1+\sinh ^{2}\rho\right)^{\alpha}} \right\}
            = \left\{z\in \B_N : \left(1-|z|^2\right)^{\alpha}>\frac{1}{\left(1+s^{1/N}\right)^{\alpha}} \right\}.
       \end{split}
       \ees
     For $u_1(z)=\left(1-|z|^2\right)^{\alpha}$, we consider $h(z)=\chi_{B_s}(z)u_1(z)|\widetilde{\nabla}u_1|_g^{-1}$ and apply the Coarea formula as given in \cite{xia24} to write
       \bes
       \begin{split}
            J(s)=\int_{\B_N}\chi _{B_s}(z)u_1(z)dm(z)
            =\int_{\left(1+s^{1/N}\right)^{-\alpha}}^1\left(\int_{u_1(z)=\kappa}u_1(z)|\widetilde{\nabla}u_1|_g^{-1}d\sigma_g\right)dk,
       \end{split}
      \ees
      where $d\sigma_g$ denotes the hyperbolic surface area measure on $\partial B_s=\left\{ z\in\B_N:u_1(z)=\kappa\right\}$ induced by the Bergman metric on $B_N$. We deduce that
      \be \label{eq:j'begining}
      J'(s)=\frac{\alpha}{N}\left(1+s^{1/N}\right)^{-2\alpha-1}s^{\frac{1}{N}-1}\int_{u_1(z)=\left(1+s^{1/N}\right)^{-\alpha}}|\widetilde{\nabla}u_1|_g^{-1}d\sigma_g.
      \ee
      Next, we apply Cauchy-Schwartz inequality which is indeed an equality in the case of the hyperbolic surface area of $\partial B_s$ to get
      \bes
      \begin{split}
      \left(\int_{\partial B_s}|\widetilde{\nabla}u_1|_g^{-1}d\sigma_g\right) =\frac{\left(\int_{\partial B_s}d\sigma_g\right)^2}{\left(\int_{\partial B_s}|\widetilde{\nabla}u_1|_gd\sigma_g\right)}. 
       \end{split}
      \ees
      Using the isoperimetric inequality on $\B_N$ (see \cite{xia24}), the numerator term in the above expression is $4N^2s^{\frac{2N-1}{N}}+4N^2s^2$. Now we shall compute the denominator term. Let $\nu$ be the outward unit normal to $\partial B_s$ with respect to the Bergman metric. Then, we have $|\widetilde{\nabla}u_1|_g=-\langle\widetilde{\nabla}u_1,\nu\rangle_g$. Since for $z\in \partial B_s$ we have $u_1(z)=\left(1+s^{1/N}\right)^{-\alpha}$, we obtain
      \bes
      \frac{|\widetilde{\nabla}u_1|_g}{\left(1+s^{1/N}\right)^{-\alpha}}= \frac{|\widetilde{\nabla}u_1|_g}{u_1}= -\frac{\langle\widetilde{\nabla}u_1,\nu\rangle_g}{u_1}=-\langle\widetilde{\nabla}\log u_1,\nu\rangle_g.
      \ees
      Using the Gauss divergence theorem, we have
      \bes
      \begin{split}
      \int_{\partial B_s}|\widetilde{\nabla}u_1|_gd\sigma_g=&\left(1+s^{1/N}\right)^{-\alpha}\left(\int_{\partial B_s}\frac{|\widetilde{\nabla}u_1|_g}{u_1}d\sigma_g\right)\\=&-\left(1+s^{1/N}\right)^{-\alpha}\left(\int_{\partial B_s}\langle\widetilde{\nabla}\log u_1,\nu\rangle_gd\sigma_g\right)\\=&-\left(1+s^{1/N}\right)^{-\alpha}\left(\int_{B_s}\widetilde{\Delta}\left(\log u_1(z)\right) dm(z)\right).
       \end{split}
      \ees
      Note that $u_1(z) \neq 0$ for $z \in B_s$. As a result, $\log u_1(z)$ is well defined on $B_s$ and $\partial B_s$, and a simple computation shows that $\widetilde{\Delta}\left(\log u_1\right)=-4N\alpha$. Therefore, we obtain
      \bes
      \int_{\partial B_s}|\widetilde{\nabla}u_1|_gd\sigma_g=4N\alpha s\left(1+s^{1/N}\right)^{-\alpha}.
      \ees
      From the above calculation, we deduce from equation (\ref{eq:j'begining}) that
      \bes
      J'(s)=\left(1+ s^{1/N}\right)^{-\alpha}.
      \ees
      After differentiating the expression of $J'(s)$ w.r.t. $s$, we arrive at the desired equation (\ref{eq:j''}).
    \end{proof}
    Now we prove the last lemma of this section that is required to establish the equality cases in Theorem C.
    \begin{lemma}[Under Assumption \ref{assump}]\label{lem:propertyofg}
        Let $\phi \in \A_{\alpha}$ with $\|\phi\|_{\A_{\alpha}}=1$ and consider $I_\phi(s)$ and $J(s)$ as defined in (\ref{eq:i}) and (\ref{eq:j}) respectively. Then, for the function $G(s):=I_\phi(s)-J(s)$, the following statements are equivalent
        \begin{enumerate}
        \item $G'(0)=0$.
            \item $\phi \equiv e^{i\theta}\phi_{z_0}$ for some $z_0 \in \B_N$, $\theta\in \R$.
            \item $G(s)=0$ for all $s\geq 0$.
        \end{enumerate}
    \end{lemma}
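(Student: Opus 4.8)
The plan is to prove the cycle of implications $(2)\Rightarrow(3)\Rightarrow(1)\Rightarrow(2)$, which is the most economical route given the tools already assembled.

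\medskip
\noindent\textbf{Step 1: $(2)\Rightarrow(3)$.} Suppose $\phi\equiv e^{i\theta}\phi_{z_0}$. Then $u_\phi(z)=|\phi_{z_0}(z)|^2(1-|z|^2)^\alpha = u_{\phi_{z_0}}(z)$, and since $T_{z_0}$ is a surjective isometry of $\A_\alpha$ that transports the M\"obius-invariant measure $dm$ to itself and sends $1$ to $\phi_{z_0}$, the superlevel sets $\Omega_{\phi,t}$ are exactly the images under the involution $\Upsilon_{z_0}$ of the superlevel sets of $u_1(z)=(1-|z|^2)^\alpha$, which are the balls $B_s$ centered at the origin. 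Consequently $\mu_\phi(t)=\mu_1(t)$ for all $t$, hence $u_\phi^*(s)=u_1^*(s)$, and $\Omega_{\phi,u_\phi^*(s)}$ is, up to measure zero, a ball of measure $s$ centered at $z_0$. Pulling everything back by the measure-preserving $\Upsilon_{z_0}$ gives $I_\phi(s)=\int_{\Omega_{\phi,u_\phi^*(s)}}u_\phi\,dm=\int_{B_s}u_1\,dm=J(s)$ for all $s\ge 0$, i.e.\ $G\equiv 0$.

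\medskip
\noindent\textbf{Step 2: $(3)\Rightarrow(1)$.} This is immediate: if $G(s)=0$ for all $s\ge 0$, then in particular $G$ is differentiable at $0$ with $G'(0)=0$.

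\medskip
\noindent\textbf{Step 3: $(1)\Rightarrow(2)$.} This is the substantive implication and the main obstacle. Using Lemma with the formula $I_\phi'(s)=u_\phi^*(s)$ and the companion $J'(s)=(1+s^{1/N})^{-\alpha}$, we have $G'(s)=u_\phi^*(s)-(1+s^{1/N})^{-\alpha}$; letting $s\to 0^+$ and using that $u_\phi^*$ maps $[0,\infty)$ continuously onto $(0,\max u_\phi]$, we get $G'(0)=\max_{\B_N}u_\phi - 1$. Thus $G'(0)=0$ forces $\max_{\B_N}u_\phi=\|\phi\|_{\A_\alpha}^2=1$, i.e.\ equality in the pointwise bound of Proposition~\ref{prop:pointevaluation} at the point where the maximum is attained. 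By the equality statement in that proposition, $\phi=c\,\phi_{z_0}$ with $|c|=\|\phi\|_{\A_\alpha}=1$, which is precisely $(2)$.

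\medskip
The only delicate point is the boundary evaluation $u_\phi^*(0^+)=\max_{\B_N}u_\phi$ and the fact that $J'(0)=1$: the first follows from the stated continuity and monotonicity of $u_\phi^*$ (since $u_\phi^*$ is the inverse of $\mu_\phi$ and $\mu_\phi(t)\to 0$ as $t\uparrow\max u_\phi$), and the second is a direct substitution $s=0$ into \eqref{eq:j'}. Everything else is bookkeeping with the already-established regularity of $I_\phi$ and $J$ and the isometry property of $T_{z_0}$.
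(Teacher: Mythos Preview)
Your proof is correct and follows essentially the same approach as the paper: the paper computes $G'(0)=u_\phi^*(0)-1=\|u_\phi\|_\infty-1$ and invokes the equality case of Proposition~\ref{prop:pointevaluation} for $(1)\Leftrightarrow(2)$, then uses the M\"obius change of variables $\Upsilon_{z_0}$ and the invariance of $dm$ for $(2)\Rightarrow(3)$, with $(3)\Rightarrow(1)$ trivial. The only cosmetic difference is that the paper packages $(1)\Leftrightarrow(2)$ as a single biconditional rather than closing a cycle, but the content is identical.
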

    \begin{proof}
        ($1\Longleftrightarrow 2$) We have $G'(0)=I_\phi'(0)-J'(0)=u_\phi^*(0)-1=\|u_\phi\|_{\infty}-1$. Therefore, $G'(0)=0$ if and only if $\|u_\phi\|_{\infty}=1$. It follows from the equality part of Proposition \ref{prop:pointevaluation} that $\|u_\phi\|_{\infty}=1$ if and only if $\phi \equiv e^{i\theta}\phi_{z_0}$ for some $z_0 \in \B_N$, $\theta\in \R$.

\smallskip

        ($2 \Longrightarrow 3$) If we consider $\phi \equiv e^{i\theta}\phi_{z_0}$ for some $z_0 \in \B_N$, $\theta\in \R$, then
        \bes
        u_\phi(z)=\left|e^{i\theta}\phi_{z_0}(z)\right|^2\left(1-|z|^2\right)^{\alpha}=\left(\frac{\left(1-|z_0|^2\right)\left(1-|z|^2\right)}{|1-\langle z,z_0\rangle|^2}\right)^{\alpha}=\left(1-|\Upsilon_{z_0}(z)|^2\right)^{\alpha}.
        \ees
        Therefore, $\Omega_{\phi,u_\phi^*(s)}$ is a ball centered at $z_0$ and $m(\Omega_{\phi,u_\phi^*(s)})=\mu_{\phi} (u_\phi^*(s))=s$. Using the change of variable as given in \cite[Equation (1.26)]{zhu05}, we have, for every $s\geq0$, 
        \bes
        I_\phi(s)= \int_{\Omega_{\phi,u_\phi^*(s)}}\left(1-|\Upsilon_{z_0}(z)|^2\right)^{\alpha}dm(z)=\int_{B_s}\left(1-|z|^2\right)^{\alpha}dm(z)=J(s).
        \ees
        ($3 \Longrightarrow1$) This is obvious. With that we conclude the proof of the lemma.
    \end{proof}
    Now, we are ready to prove the main result of this section.
\begin{proof}[Proof of Theorem \hyperlink{th:C}{C}]
We will begin by proving the inequality part of the theorem. Let $\phi\in\A_\alpha$ be a function with Bergman norm 1, and let $E\subset \B_N$ be a set with $m(E)=s$. Then we can write
    $$E=(E\cap \Omega_{\phi,u_\phi^*(s)})\cup(E\backslash \Omega_{\phi,u_\phi^*(s)})\;\;\; \text{and}\;\;\ \Omega_{\phi,u_\phi^*(s)}=(\Omega_{\phi,u_\phi^*(s)}\cap E )\cup(\Omega_{\phi,u_\phi^*(s)}\backslash E)$$ 
    where, $\Omega_{\phi,u_\phi^*(s)}=\{u_\phi>u_\phi^*(s)\}$.
    Observe that $$u_\phi>u_\phi^*(s)\;\;\;  \text{on}\;\; \Omega_{\phi,u_\phi^*(s)}\backslash E\;\;\;\text{and}\;\; u_\phi\leq u_\phi^*(s)\;\;\ \text{on} \;\; E\backslash \Omega_{\phi,u_\phi^*(s)}.$$ 
    In addition, $$m(E)=m(\Omega_{\phi,u_\phi^*(s)})=s\;\; \text{implies}\;\;\  m(E\backslash \Omega_{\phi,u_\phi^*(s)})=m(\Omega_{\phi,u_\phi^*(s)}\backslash E).$$ 
    Therefore, we have the following reduction
    \be \label{eq:integrallessthani}
    \int_{E}u_\phi(z)dm(z) \leq I_\phi(s)=\int_{\Omega_{\phi,u_\phi^*(s)}}u_\phi(z)dm(z).
    \ee
   To  prove the the inequality (\ref{eq:faberinequality}), it is enough to show that $G(s)\leq0$ for $s>0,$ where $G(s)$ is defined in the previous lemma. From equations (\ref{eq:u*'}) and (\ref{eq:i'}), we have
   \bes
   I_\phi''(s)=-\left(\int_{u_\phi=u_\phi^*(s)}\left|\widetilde{\nabla}u_\phi\right|_g^{-1}d\sigma_g\right)^{-1}.
   \ees
   By Cauchy Schwartz inequality, we have
   \be  \label{eq:i'' beginning}
   I_\phi''(s) \geq -\frac{\left(\int_{u_\phi=u_\phi^*(s)}|\widetilde{\nabla}u_\phi|_gd\sigma_g\right)}{\left(\int_{u_\phi=u_\phi^*(s)}d\sigma_g\right)^2}.
   \ee
    Using the isoperimetric inequality on $\B_N$ (see \cite{xia24}), the denominator term in the above expression is bounded below by $4N^2s^{\frac{2N-1}{N}}+4N^2s^2$. To estimate the numerator term, let $\nu$ be the outward unit normal to $\partial\Omega_{\phi,u_\phi^*(s)}$ with respect to the Bergman metric. Then, we have $|\widetilde{\nabla}u_\phi|_g=-\langle\widetilde{\nabla}u_\phi,\nu\rangle_g$ and for $z\in \partial \Omega_{\phi,u_\phi^*(s)}$, we have $u_\phi(z)=u_\phi^*(s)$. Therefore,
      \bes
      \frac{|\widetilde{\nabla}u_\phi|_g}{u_\phi^*(s)}= \frac{|\widetilde{\nabla}u_\phi|_g}{u_\phi}= -\frac{\langle\widetilde{\nabla}u_\phi,\nu\rangle_g}{u_\phi}=-\langle\widetilde{\nabla}\log u_\phi,\nu\rangle_g.
      \ees
      Applying Gauss divergence theorem, we obtain
      \be \label{eq:gradiantuevaluation}
      \int_{\partial \Omega_{\phi,u_\phi^*(s)}}|\widetilde{\nabla}u_\phi|_gd\sigma_g=-u_\phi^*(s)\left(\int_{\partial \Omega_{\phi,u_\phi^*(s)}}\langle\widetilde{\nabla}\log u_\phi,\nu\rangle_gd\sigma_g\right)=-u_\phi^*(s)\left(\int_{\Omega_{\phi,u_\phi^*(s)}}\widetilde{\Delta}\left(\log u_\phi(z)\right) dm(z)\right).
      \ee
      As $u_\phi(z) \neq 0$ for $z \in \Omega_{\phi,u_\phi^*(s)}$, $\log u_\phi(z)$ is well defined on $\Omega_{\phi,u_\phi^*(s)}$ as well on $\partial \Omega_{\phi,u_\phi^*(s)}$. We have $\widetilde{\Delta}\log u_\phi(z)=2\widetilde{\Delta}\log|\phi(z)|+\alpha \widetilde{\Delta}\log\left(1-|z|^2\right)$. As $\phi(z) \neq 0$ for $z \in \Omega_{\phi,u_\phi^*(s)}$, the first term $ 2\widetilde{\Delta}\log|\phi(z)|=0$ while the second is $\alpha \widetilde{\Delta}\log\left(1-|z|^2\right)=-4N\alpha$. Thus, the right-hand side of (\ref{eq:gradiantuevaluation}) is equal to $4N\alpha su^*(s)$.
      Finally, we have
      \be \label{eq:i''estimate}
       I_\phi''(s)\geq-\frac{\alpha sI_\phi'(s)}{Ns^{\frac{2N-1}{N}}+Ns^2}.
       \ee
       If we take $h(s)=\left(1+s^{1/N}\right)^{\alpha}$, then it turns out that $(hG')'(s) \geq 0$ for $s\geq0$, that is, $hG'$ is an increasing function. Note that,
       \bes
       I_\phi(0)=J(0)=0 \quad \text{and} \quad \lim_{s\rightarrow +\infty}I_\phi(s)= \lim_{s\rightarrow +\infty}J(s)=1.
       \ees
       This implies,
       \bes
       G(0)=0 \quad \text{and} \quad \lim_{s\rightarrow +\infty}G(s)=0.
       \ees
       It follows from (\ref{eq:pointevaluation}) that $I_\phi'(0)=\|u_\phi\|_{\infty}\leq \|\phi\|_{A_{\alpha}}^2=1$, which further implies
       \bes
       G'(0)=I_\phi'(0)-J'(0)\leq0.
       \ees
       From Lemma \ref{lem:propertyofg}, $G'(0)=0$ if and only if $G(s)=0$ for all $s\geq0$. The other possibility is the case when $G'(0)<0$. In this case we show that $G(s)<0$ for all $s>0$. If possible, let $G(r_1)\geq0$ for some $r_1>0$. If we set
       \bes
       r_0:=\inf\{r>0:G(r)\geq0\},
       \ees
       then $G(r_0)=0$. Since  $G(0)=0$, by Rolle's theorem there exists $s_0\in (0,r_0)$ such that $G'(s_0)=0$. 
              Clearly $G(s_0)<0$ and hence, $G(r_0)>G(s_0)$. Therefore, we can find an $s_1 \in (s_0,r_0)$ such that $G'(s_1)>0$. As $G(r_0)=0$ and $\lim_{s\rightarrow +\infty}G(s)=0$, by Rolle's theorem there will be an $s_2 \in (r_0,+\infty)$ such that $G'(s_2)=0$. Therefore, we have  $(hG')(s_0)=0=(hG')(s_2)$, and $(hG')(s_1)>0$. This leads to a contradiction as 
       $hG'$ is an increasing function. This proves that $G(s)\leq0$ for $s>0,$ and the inequalty (\ref{eq:faberinequality}).

       Now, we consider the equality case. From the above discussion, we note that either $G(s)<0$ for all $s>0$ or $G(s)=0$ for all $s>0$. Suppose there is equality in (\ref{eq:faberinequality}) for some $s_0>0$. Then $G(s_0)=0$, and this means $G(s)=0$ for all $s>0$. Hence, from Lemma \ref{lem:propertyofg} it follows that $\phi\equiv e^{i\theta}\phi_{z_0}$ for some $z_0 \in \B_N$, $\theta\in \R$. Also, $E$ must coincide (up to measure 0) with $\Omega_{\phi,u_\phi^*(s)}$ (otherwise we would have strict inequality in (\ref{eq:integrallessthani})).

       Conversely, suppose $\phi\equiv e^{i\theta}\phi_{z_0}$ for some $z_0 \in \B_N$, $\theta\in \R$. Then, it follows from Lemma \ref{lem:propertyofg} that $G(s)=0$ for all $s\geq0$ and $\{u_\phi>u_\phi^*(s)\}$ is a ball centered at $z_0$. Also, since $E$ is equivalent (up to measure 0) to a ball centered at $z_0$ , with $m(E)=s$, there is equality in (\ref{eq:integrallessthani}). As a result, there is equality in (\ref{eq:faberinequality}).
    \end{proof}
{\bf Conflict of Interest}: The author have no conflicts of interest to declare that are relevant to the content of this article.

{\bf Data Availability}: No data sets were generated or analysed during the study. We confirm that all the data are included in this article.
\section*{Acknowledgments}
I would like to express my sincere gratitude to Pratyoosh Kumar for his constant help, insightful advice, and encouragement throughout the course of this work.

\bibliography{Ref}
\bibliographystyle{plainurl}
\end{document}